\newcommand{\R}{\mathbb{R}}
\newcommand{\N}{\mathbb{N}}
\newcommand{\Bb}{\mathcal{B}}
\newcommand{\Xx}{\mathcal{X}}
\newcommand{\Sys}{\mathfrak{S}}
\newcommand{\Ll}{\mathcal{L}}
\newcommand{\Oo}{\mathcal{O}}
\newcommand{\Inf}{\mathsf{Inf}}
\newcommand{\Traces}{\mathsf{Traces}}
\newcommand{\Aa}{\mathcal{A}}
\newcommand{\hX}{\hat{\Xx}}
\newcommand{\hf}{\hat{f}}
\newcommand{\hSys}{\hat{\Sys}}
\newcommand{\hx}{\hat{x}}
\newcommand{\Orac}{\Oo_{\Xx_{VF}}}
\newcommand{\VM}[1]{{#1}}
\newtheorem{theorem}{Theorem}[section]
\newtheorem{problem}[theorem]{Problem}
\newtheorem{lemma}[theorem]{Lemma}
\newtheorem{corollary}[theorem]{Corollary}
\newtheorem{definition}[theorem]{Definition}
\title{Co-B\"uchi Barrier Certificates for Discrete-time Dynamical Systems}
\begin{document}

\thanks{
Corresponding author Vishnu Murali.
This work was supported by the NSF grants ECCS-2015403, CNS-2039062, and CNS-2111688.
}
\author[vishnu murali]{Vishnu Murali$^1$}
\author[ashutosh trivedi]{Ashutosh Trivedi$^1$}
\author[majid zamani]{Majid Zamani$^1$}

\address{$^1$University of Colorado Boulder}
\email{\{vishnu.murali,ashutosh.trivedi,majid.zamani\}@colorado.edu }
\urladdr{}
\urladdr{}
\urladdr{}
\begin{abstract}
Barrier certificates provide functional overapproximations for the reachable set of dynamical systems and provide inductive guarantees on the safe evolution of the system.
In  automata-theoretic verification, a key query is to determine whether the system visits a given predicate over the states ``finitely often'', resulting from the complement of the traditional B\"uchi acceptance condition. 
This paper proposes a barrier certificate approach to answer such queries by developing a notion of \emph{co-B\"uchi barrier certificates} (CBBCs) that 
generalize classic barrier certificates to ensure that the traces of a system visit a given predicate a fixed number of times.
Our notion of CBBC is inspired from \emph{bounded synthesis} paradigm to LTL realizability,  where the LTL specifications are converted to safety automata  via (universal) co-B\"uchi automata with a bound on final state visitations provided as a hyperparameter.
Our application of CBBCs in verification is analogous: we fix a bound and search for a suitable barrier certificate, increasing the bound if no suitable function can be found.
We then use these CBBCs to verify our system against properties specified by co-B\"uchi automata and demonstrate their effectiveness via some case studies. 
\end{abstract}

\maketitle

\section{Introduction}
\label{sec:intro}
The notion of \emph{barrier certificates}~\cite{prajna_2002_introducing}
combined with constraint satisfaction based search approaches have transformed automatic safety verification for dynamical systems.
A barrier certificate is a function over the state set of the system, such that the value of the barrier is non-positive over the initial set of states, is non-increasing as the system evolves, and positive over the unsafe set of states.
They thus act as a functional overapproximation of the reachable set of states where any state is in the overapproximation if it has a value that is non-positive.
When combined with a positivity condition over the unsafe set, it guarantees safety.
In the classic automata-theoretic approach to verification, the central query involves ascertaining whether a set of states may only be visited finitely often. 
In particular, bounded synthesis based approaches ~\cite{schewe_2007_bounded,filiot_2009_antichain,filiot_2011_antichains,bohy_2012_acacia+} are centered around the idea that the problem of determining whether a set is visited only finitely often can be answered by demonstrating a safety condition; this condition characterizes that the number of times the system can reach the aforementioned set is bounded from above by some fixed hyperparameter.
\emph{This paper presents an approach to automata-theoretic verification of discrete-time dynamical systems by advocating a notion of Co-B\"uchi barrier certificates. }

\noindent \textbf{Contributions.}
The key idea of the paper builds around the following observation: while the question of visiting a region only finitely often is not a direct safety property for the system, if one instead considers a related system that appends the original system with a counter keeping track of the region visitation, a bounded-visitation property can be cast as a safety property over the extended system. 
Moreover, if the extended system is safe with respect to the modified safety property, it provides a certificate for bounded-visitation of the region for the original system.
Of course, we need not explicitly compute this modified system; instead we may simply modify the definition of barrier certificates over the original system.
We dub this modified definition \emph{Co-B\"uchi barrier certificates} (CBBCs).
We build on this idea, and propose an automata-theoretic technique to verify discrete-time dynamical systems against properties specified as universal co-B\"uchi automata (UCA).
We present two computational methods (SMT and Sum-of-Squares based) to  search for these CBBCs.

\noindent \textbf{Related works.}
Traditional model checking approaches~\cite{henzinger_1997_hytech,tabuada_2009_verification,baier_2008_principles} for continuous space systems against temporal properties, such as those specified in linear temporal logic~\cite{pnueli_1977_temporal}, typically involve building and verifying their finite-state abstractions against these properties.
However, these approaches suffer from the curse of dimensionality: their complexity is exponential in the dimension of the state space.
First proposed by Prajna et al.~\cite{prajna_2004_safety}, the notion of barrier certificates has seen widespread use~\cite{prajna_2007_convex,ames_2016_control,clark_2021_control} in the verification of safety for continuous-space systems as they provide an abstraction-free approach. 
While these approaches are sound, techniques to find barrier certificates of a certain template (\textit{e.g.} a polynomial) that rely on semidefinite programming~\cite{boyd_2004_convex} such as sum-of-squares~\cite{Parrilo_2003}, have a complexity that is polynomial in the dimension of the state space (assuming their template is fixed).

Wongpiromsarn et al.~\cite{wongpiromsarn_2015_automata} introduced the idea of using barrier certificate to verify continuous-space systems against temporal logic properties and subsequently it has inspired several extensions~\cite{dimitrova_2014_deductive,papusha_2016_automata,jagtap_2018_temporal,jagtap_2019_formal,anand_2022_small}.
A key feature of the  approaches from~\cite{wongpiromsarn_2015_automata,jagtap_2019_formal} is the idea of ``cutting'' the automaton (by breaking it into state triplets) using barrier certificates; we call this approach the state triplet approach in our work.
As an example consider a system, a labeling function mapping states to observations, and a specification.
The complement of the specification is shown in Figure~\ref{fig:aut_eg_1} and our goal is to find a barrier certificate that cuts every path from the initial state to the accepting state by demonstrating that some sequences of transitions within this path is impossible.
For instance, in Figure~\ref{fig:aut_eg_1}, we are required to ``cut'' the state triplet $(q_0, q_1, q_2)$ to disallow all traces that start from any state with the label $b$ from reaching a state with the label $a$.
\begin{figure}[t]
    \centering
    \begin{tikzpicture}[node distance =2cm]
    \node[initial, state, draw, initial text =,fill=blue!10!white] (0) at (0,0) {$q_0$};\
    \node[,state, fill=blue!10!white,] (1) at (2,0) {$q_1$};
    \node[ accepting,state, fill=blue!10!white,] (2) at (4,0) {$q_2$};
    \path[->]
    (0) edge node[above]{$b$} (1)
    (1) edge[loop below] node{$b$} (1)
    (1) edge node[above]{$a$} (2)
    (2) edge[loop right] node{$a , b$} (2);
    \end{tikzpicture}
    \caption{Example NBA $\Aa_{b_1}$ from Section~\ref{sec:intro} which represents the complement of a safety specification.}
    \label{fig:aut_eg_1}
\end{figure}
While the state triplet approach is useful in verifying properties specified by automaton on finite words (especially properties that are specified by co-safe languages~\cite{kupferman_2001_model}) , it unfortunately  faces limitations when dealing with general automata on infinite words.
Note that the acceptance condition of automata over infinite words ask whether an accepting state is visited infinitely often, however this state triplet approach may be used to verify only those systems whose traces reach the accepting state of the automaton at most twice (cf. Subsection~\ref{subsec:triplet}).
Further this approach is independent of the runs in the automaton and of the initial states of the system, in the following way:
one is required to break the edge pairs of a path regardless of what states of the automaton may be encountered before or after.
We show (cf. Subsection~\ref{sec:subsumption}) that our CBBC-based approach generalizes the state triplet approach for discrete-time dynamical systems.
The authors of~\cite{murali_2023_closure} present an approach to verify $\omega$-regular properties via transition invariants rather than state invariants. However, the complexity of the certificates is dependent on twice the number of dimensions of the system.
Our notion of CBBCs is inspired by the ideas present in bounded synthesis~\cite{schewe_2007_bounded,filiot_2009_antichain,filiot_2011_antichains,bohy_2012_acacia+}, where one successively increases the bound on the number of times a region is visited.
While the approach in~\cite{papusha_2016_automata} relies on the product of the system with the automaton and is thus less conservative in comparison to the state triplet approach, it does not allow the same degree of giving incremental guarantees, as one needs to unfold the automaton to increase the bound on the number of visits.
The authors of \cite{dimitrova_2014_deductive} provide a notion of parity certificates to provide guarantees against parity automata via a combination of inductive invariants and ranking functions.
CBBCs are not only sufficient but also necessary and rely only on safety arguments.
Our notion of CBBCs have been extended to provide a compositional approach to synthesize controllers for large-scale  systems represented as interconnections of smaller subsystems in \cite{galarza-jimenez_2024} as well as for data-driven and stochastic systems \cite{ajeleye2024data,ajeleye2024co}.

\section{Preliminaries}
\label{sec:prelims}
We use $\N$ and $\R$ to denote the set of natural numbers and reals.
For $a \in \R$, we use $\R_{\geq a}$ and $\R_{> a}$ to denote the intervals $[a, \infty)$ and $(a,\infty)$ respectively, and similarly, for any natural number $n \in \N$, we use $\N_{\geq n}$ to denote the set of natural numbers greater than or equal to $n$.
\VM{We use $[0..k]$ to denote the set $\{0,\ldots, k \}$ for any $k \in \N$.}
Given a set $A$, sets $A^{*}$ and $A^{\omega}$ denote the set of finite and countably infinite sequences of elements in $A$, while $|A|$ denotes the cardinality of the set. 
Given two sets $A,B$, we use $A \setminus B$ to denote the set containing those elements that are present in $A$ but not in $B$, and as usual use $A \cup B$ and $A \cap B$ to represent their union and intersection.
If $A \subseteq B$, and the set $B$ can be inferred from the context, we denote the complement $B \setminus A$ simply as $\overline{A}$. 

Given a function $f:A \to A$, we define $f^i:A \to A$ recursively as $f^0(a) = a$ for any $a \in A$, and $f^{i+1}(a) = f^i(f(a))$ for all $i \in \N$.
Given a relation $R \subseteq A \times B$, and an element $a \in A$, let $R(a)$ denote the set $\{ b \mid (a,b) \in R \}$. 
We use $(a_1, a_2, \ldots, a_n) \in A^{*}$ to indicate a finite-length sequence of length $n \in \N$, and $\langle a_0, a_1, \ldots  \rangle\in A^{\omega}$ to indicate a countably infinite sequence of elements from the set $A$.
We say that an infinite sequence $s = \langle a_0, a_1, \ldots \rangle\in A^{\omega}$, visits an element $a \in A$, at most $n$ times if there exists at most $n$ distinct indices $i_0, i_1, \ldots, i_{(n - 1)} \in \N$ with $a_{i_j}  = a $ for all $j \in [0..(n-1)]$, and $a_\ell \neq a$ for all $\ell \in \N \setminus \{i_0, i_1 ,\ldots, i_{(n - 1)} \}$.
This generalizes to determining whether a sequence $s$ visits a subset $B \subseteq A$ at most $n$ times.
Let $\Inf(s)$ be the set of elements that occur infinitely often in the sequence $s$.

\subsection{Automata over Infinite Words}
\label{subsec:prelims_aut}
An automaton $\Aa$ is a tuple $(\Sigma,Q, Q_0, \delta, Q_{Acc})$, where $\Sigma$ denotes a finite alphabet, $Q$ a finite set of states, $Q_0 \subseteq Q$ an initial set of states, $\delta \subseteq Q \times \Sigma \times Q$ the transition relation, and $Q_{Acc} \subseteq Q$ denotes a set of accepting states.
A run of the automaton $\Aa = (\Sigma,Q, q_0, \delta, Q_{Acc})$ over a word $w = \langle \sigma_0, \sigma_1, \sigma_2, \ldots \rangle \in \Sigma^{\omega}$, is an infinite sequence of states $\rho = \langle q_0,q_1, q_2, \ldots \rangle \in Q^{\omega}$ with $q_0 \in Q_0$ and $q_{i+1} \in \delta(q_i, \sigma_i)$.
A \emph{nondeterministic B\"uchi automaton} (NBA) $\Aa_b = (\Sigma,Q, Q_0, \delta, Q_{Acc})$ is said to accept a word $w$, if there exists a run $\rho$ on $w$ where $\Inf(\rho) \cap Q_{Acc} \neq \emptyset$.
A \emph{Universal Co-B\"uchi automaton} (UCA) $\Aa_c = (\Sigma,Q, Q_0, \delta, Q_{Acc})$ accepts a word $w$, if for every run $\rho$ on $w$ we have $\Inf(\rho) \cap Q_{Acc} = \emptyset$.
Our definition of a UCA is based on the definition considered in~\cite{filiot_2011_antichains}, though alternative, and equivalent semantics exist~\cite{esparza_2014_ltl}.
For $k \in \N$, a $k$-Universal  Co-B\"uchi automaton ($k$-UCA) $\Aa_{c,k} = (\Sigma,Q, Q_0, \delta, Q_{Acc})$ accepts a word $w$ if, every run $\rho$ on $w$ visits some state in the set $Q_{Acc}$ at most $k$ times.
The language of an automaton $\Aa$ denoted $L(\Aa)$ is defined as the set of words accepted by the automaton, \textit{i.e.}, $L(\Aa) = \{ w \mid \Aa \text{ accepts word } w \}$.
It is well known that NBAs are closed under complementation~\cite{safra_1988_complexity}: given an NBA $\Aa_b = (\Sigma,Q, Q_0, \delta, Q_{Acc})$, there exists an NBA $\Aa'_b = (\Sigma, Q', Q'_0, \delta', Q'_{Acc})$ such that $L(\Aa_b) = \overline{L(\Aa'_b)}$.
We also observe that the acceptance conditions of an NBA and a UCA are duals of each other, \textit{i.e.}, given an NBA $A_b =  (\Sigma,Q, Q_0, \delta, Q_{Acc})$, and a UCA $\Aa_c =  (\Sigma,Q, Q_0, \delta, Q_{Acc})$, with the same structure (the same sets and transition relation), we have $L(\Aa_b) = \overline{L(\Aa_c)}$. 
Observe that for natural numbers $k_1, k_2 \in \N$, a $k_1$-UCA $\Aa_{c, k_1} =  (\Sigma,Q, Q_0, \delta, Q_{Acc})$, and a $k_2$-UCA $\Aa_{c, k_2} =  (\Sigma,Q, Q_0, \delta, Q_{Acc})$ with the same structure, we have $L(\Aa_{c, k_1}) \subseteq L(\Aa_{c, k_2})$.
Finally, for a $k$-UCA $\Aa_{c, k} =  (\Sigma,Q, Q_0, \delta, Q_{Acc})$ and the UCA $\Aa_{c} =  (\Sigma,Q, Q_0, \delta, Q_{Acc})$ we have $L(\Aa_{c, k_1}) \subseteq L(\Aa_{c})$.

\subsection{Discrete-time Dynamical System}
\label{subsec:prelims_system}
A discrete-time dynamical system (or simply, a system) $\Sys$ is a tuple $(\Xx,\Xx_0, f)$, where $\Xx \subseteq \R^n$ denotes the state set, $\Xx_0 \subseteq X$ denotes a set of initial states, and $f:\Xx \to \Xx$ the state transition function.
The state evolution of the system is given by $\Sys: x(t+1) = f(x(t)).$

A \emph{state sequence} is an infinite sequence  $\langle x_0, x_1, \ldots \rangle \in \Xx^{\omega}$ where $x_0 \in \Xx_0$, and $x_{i+1} = f(x_i)$, for all $i \in \N$.
We associate a labeling function $\Ll: \Xx \to \Sigma$ which maps each state with a letter in a finite alphabet. 
This generalizes to mapping a state sequence of the system $\langle x_0, x_1, \ldots \rangle \in \Xx^{\omega}$ to a a trace or word $w = \langle \Ll(x_0), \Ll(x_1), \ldots \rangle \in \Sigma^{\omega}$. 
Finally, let $\Traces(\Sys, \Ll)$ denote the set of all traces of system $\Sys$ under the labeling map $\Ll$. 

\subsection{Safety Verification and Barrier Certificates}
\label{subsec:prelims_safety}
A system $\Sys$ is safe with respect to a set of unsafe states $\Xx_u$, if no state sequence reaches $\Xx_u$, \textit{i.e.}, for every state sequence  $\langle x_0, x_1, \ldots  \rangle$, we have $x_i \notin \Xx_u$ for all $i \in \N$.
\begin{definition}
\label{def:bar}
    A function $\Bb: \Xx \to \R$ is a barrier certificate for a system $\Sys$ with respect to a set of unsafe states $\Xx_u$ if: 
    \begin{align}
        & \Bb(x) \leq 0, \text{ for all } x \in \Xx_0, \label{eq:bar_cond_1} \\
        & \Bb(x) > 0, \text{ for all } x \in \Xx_u, \text{ and} \label{eq:bar_cond_2} \\
        & \Bb(f(x)) - \Bb(x) \leq 0, \text{ for all } x \in \Xx. \label{eq:bar_cond_3}
    \end{align}
\end{definition}
The existence of a barrier certificate ensures the safety of a system as discussed below.

\begin{theorem}
\label{thm:bar_safe}
    Consider a system $\Sys {=} (\Xx, \Xx_0, f)$, with a set of unsafe states $\Xx_u$. The existence of a function $\Bb:\Xx \to \R$ as in Definition \ref{def:bar} implies that the system is safe.
\end{theorem}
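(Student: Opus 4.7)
The plan is to prove the theorem by a straightforward induction on the time index, combined with a contradiction argument against the positivity condition on the unsafe set. The strategy exploits the inductive nature of condition~\eqref{eq:bar_cond_3}: once a state has non-positive barrier value, its successor must also have non-positive barrier value, and this propagates along any state sequence starting from $\Xx_0$.

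Concretely, I would fix an arbitrary state sequence $\langle x_0, x_1, \ldots \rangle$ of $\Sys$ and prove by induction on $i \in \N$ that $\Bb(x_i) \leq 0$. The base case $i = 0$ is immediate from condition~\eqref{eq:bar_cond_1}, since $x_0 \in \Xx_0$. For the inductive step, assuming $\Bb(x_i) \leq 0$, I would use $x_{i+1} = f(x_i)$ together with condition~\eqref{eq:bar_cond_3} applied at $x_i$ to conclude
\[
    \Bb(x_{i+1}) = \Bb(f(x_i)) \leq \Bb(x_i) \leq 0.
\]

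With this invariant established, safety follows by contradiction: if the sequence ever reached an unsafe state $x_j \in \Xx_u$, condition~\eqref{eq:bar_cond_2} would force $\Bb(x_j) > 0$, contradicting the invariant $\Bb(x_j) \leq 0$ just proved. Since the sequence was arbitrary, no state sequence visits $\Xx_u$, so $\Sys$ is safe.

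There is no real obstacle here, as the argument is entirely standard; the only subtlety worth flagging is that condition~\eqref{eq:bar_cond_3} is required to hold on all of $\Xx$ (not only on the reachable set), but this is precisely what makes the inductive step go through without any prior knowledge of reachability. The proof is, in essence, a restatement of the soundness argument of Prajna's original formulation~\cite{prajna_2004_safety}, and it serves as the template we will later generalize when adding a visit counter to obtain CBBCs.
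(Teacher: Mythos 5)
Your proof is correct and is exactly the standard argument: the paper itself omits the proof of this theorem (deferring to the cited work of Prajna et al.), and your induction-plus-contradiction structure is the canonical one, mirroring the argument the paper later gives for Theorem~\ref{thm_fin_visit}. Nothing is missing.
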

The proof of Theorem \ref{thm:bar_safe} follows from \cite{prajna_2004_safety}.
We should note that while barrier certificates are a sufficient proof of safety for continuous-time systems, they are not only sufficient but also necessary for discrete-time systems.
\begin{lemma}
    Consider a system $\Sys = (\Xx, \Xx_0, f)$ that is safe with respect to a set of unsafe states $\Xx_u$.
    Then there exists a barrier certificate $\Bb: \Xx \to \R$ as in Definition \ref{def:bar}.
\end{lemma}
\begin{proof}
Consider an oracle function $\Oo: 2^\Xx \to 2^\Xx$ defined as  $\Oo(S) = \{x' \mid \text{ there exists } i \in \N, \text{ such that } x' = f^i(x) \text{ for some } x \in S  \}$ for all sets $S \subseteq \Xx$.
    Observe that the set $\Oo( \Xx_0)$ denotes the set of all reachable states of the system.
    As the system is safe, we must have $\Oo( \Xx_0) \cap \Xx_u = \emptyset$.
    Consider the function $\Bb: \Xx \to \R$ defined as:
    $\Bb(x) =\begin{cases}
      0 & \text{ if } x \in \Oo(\Xx_0) \\
      1 & \text{ } \text{ otherwise }
    \end{cases}$.
    We now show how this function is a barrier certificate as in Definition \ref{def:bar}.
    Observe that conditions \eqref{eq:bar_cond_1} and \eqref{eq:bar_cond_2} trivially hold as $\Xx_0 \in \Oo(\Xx_0)$ by definition and $\Xx_u \cap \Oo(\Xx_0) = \emptyset$ from the assumption that the system is safe.
    Now, let us consider \eqref{eq:bar_cond_3}.
    The only way for \eqref{eq:bar_cond_3} to not hold, is if we have $\Bb(f(x)) > \Bb(x)$, in such a case we must have $\Bb(x) = 0$, and $\Bb(f(x)) = 1$. 
    If $\Bb(f(x)) = 1$, then $f(x) \notin \Oo( \Xx_0 )$, while $\Bb(x) = 0$ implies that $x \in \Oo(\Xx_0)$.
    If the state $x \in \Oo(\Xx_0)$, by the definition of oracle $\Oo$, there must exist $x_0 \in \Xx_0$, and $i \in \N$, such that $x = f^i(x_0)$.
    Furthermore $f(x) = f^{i+1}(x)$, and so we must have $f(x) \in \Oo(x)$.
    This is a contradiction. \qed
\end{proof}
We now consider the problem of verifying more complex $\omega$-regular objectives via barrier certificates.

\subsection{Problem Definition}
\label{subsec:problem}

We now focus on the problem of characterizing barrier certificates to demonstrate that a set of states is visited only finitely often, and use it to provide a barrier certificate based approach for automata-theoretic verification.
Particularly, we wish to define effective notions of barrier certificates to demonstrate that every state sequence of $\Sys = (\Xx, \Xx_0, f)$ visits a region $\Xx_{VF} \subseteq \Xx $ only finitely often.
We then use these certificates to verify systems against properties specified by automata.

Given an NBA $\Aa_b =(\Sigma,Q, Q_0, \delta, Q_{Acc}) $, and UCA $\Aa_c = (\Sigma,Q, Q_0, \delta, Q_{Acc})$ we have $L(\Aa_c ) = \overline{L(\Aa_b)}$.
Thus showing that every trace of the system $\Sys$ visits $Q_{Acc}$ only finitely often is sufficient to verify that the traces of the system are in $L(\Aa_c)$ and hence not in $L(\Aa_b)$.

\begin{problem}
\label{prob:def}
    Prove that the traces $TR(\Sys, \Ll)$ of $\Sys = (\Xx, \Xx_0, f) $ under labeling $\Ll: \Xx \to \Sigma$, are in the language of a $k$-UCA $\Aa_{c, k} = (\Sigma, Q, Q_0, \delta, Q_{Acc})$.
\end{problem}

\begin{corollary}
\label{cor:kUCA_UCA}
    If  the traces $TR(\Sys, \Ll)$ of a system $\Sys = (\Xx, \Xx_0, f) $ under a labeling map $\Ll: \Xx \to \Sigma$, are contained in the language of a $k$-UCA $\Aa_{c, k} = (\Sigma, Q, Q_0, \delta, Q_{Acc}) $, then they are contained in the language of the UCA $\Aa_c = (\Sigma, Q, Q_0, \delta, Q_{Acc}) $.
\end{corollary}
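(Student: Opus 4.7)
The plan is to derive this directly from the language inclusion $L(\Aa_{c,k}) \subseteq L(\Aa_c)$ already observed at the end of Subsection~\ref{subsec:prelims_aut} for automata sharing the same structure. Transitivity of set inclusion, applied to the hypothesis $TR(\Sys, \Ll) \subseteq L(\Aa_{c,k})$, then yields $TR(\Sys, \Ll) \subseteq L(\Aa_c)$, which is the claim. To make the argument self-contained, I would fix an arbitrary trace $w \in TR(\Sys, \Ll)$ and, by assumption, obtain $w \in L(\Aa_{c,k})$.

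By the $k$-UCA semantics, every run $\rho$ of $\Aa_{c,k}$ on $w$ visits $Q_{Acc}$ at most $k$ times. Since $\Aa_{c,k}$ and $\Aa_c$ agree on their alphabet, state set, initial set, transition relation, and accepting set, the collection of runs on $w$ coincides for the two automata, so it suffices to verify the UCA acceptance condition run by run. Unfolding the definition of ``visits at most $n$ times'' from Section~\ref{sec:prelims}, the set of indices at which $\rho$ enters $Q_{Acc}$ is finite, of cardinality at most $k$, so no accepting state appears at infinitely many positions of $\rho$. This gives $\Inf(\rho) \cap Q_{Acc} = \emptyset$, which is exactly the UCA acceptance criterion, so $w \in L(\Aa_c)$.

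There is no substantive obstacle here; the proof is a short definitional unfolding combined with the elementary fact that a finite set of indices cannot witness an element of $\Inf(\rho)$. The only minor subtlety worth flagging is that, in the $k$-UCA semantics, the bound $k$ is uniform across all runs of all accepted words, but this uniformity is not used in the present argument --- each run of $w$ is handled on its own with only the finiteness of its accepting-visits counting toward the conclusion.
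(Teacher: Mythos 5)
Your proof is correct and matches the paper's (implicit) argument: the paper states $L(\Aa_{c,k}) \subseteq L(\Aa_c)$ at the end of Subsection~\ref{subsec:prelims_aut} and the corollary follows by transitivity of inclusion, exactly as you argue. Your additional unfolding --- that at most $k$ visits to $Q_{Acc}$ along a run forces $\Inf(\rho) \cap Q_{Acc} = \emptyset$ since the two automata share the same runs --- is precisely the justification the paper leaves implicit.
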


\section{Co-B\"uchi Barrier Certificates (CBBCs)} 
\label{sec:finite_visits}
Barrier certificates act as proofs for safety, however many properties of interest require showing that a given region is visited a finite number of times rather than never visited at all.
In particular, the accepting conditions of NBAs and UCAs rely on showing that the accepting states are visited infinitely or only finitely often.
We thus consider approaches where one may use barrier certificates to show a set $\Xx_{VF}$ is visited only finitely often.

\VM{Our notion of CBBCs are inspired by the bounded synthesis paradigm \cite{filiot_2011_antichains}, where one fixes the number of times an accepting state is visited and tries to prove that the system does not visit an accepting state beyond this bound.
Analogously, we fix a counter to keep track of the number of times a state in $\Xx_{VF}$ is seen. We then modify the system $\Sys = (\Xx, \Xx_0, f)$ by appending this counter variable to the state set.
Formally, we construct a system $\Sys' = (\Xx', \Xx'_0, f')$, where $\Xx' = \Xx \times \N$ denotes the state set (containing an additional counter value).
The set $\Xx_0' := \Xx_0 \times \{ 0 \} $
consist of the initial set of states, and the transition function $f': \Xx' \to \Xx'$ is defined as:
\begin{equation}
\label{eq:counter_trans}
    f'((x,\ell)) = 
\begin{cases}
(f(x), \ell+1) & \text{ for all } x \in \Xx_{VF}, \\
(f(x), \ell) & \text{ otherwise},
\end{cases}
\end{equation}
}
\VM{for all $i \in \N$.}
Observe that any state sequence $\langle x_0, x_1, \ldots \rangle$ of the system $\Sys$ that visits the region $\Xx_{VF}$ only finitely often must have some index $i \in N$, with $x_j \notin \Xx_{VF}$ for all $j \geq i$.
This corresponds to a state sequence $\langle x'_0, x'_1, \ldots \rangle$ of the system $\Sys'$ such that there exists some $k \in \N$ with $x'_{j+1} = (f(j),k )$ for all $j \geq i$.
Therefore showing the system $\Sys$ visits $\Xx_{VF}$ only finitely often reduces to showing the system $\Sys'$ never reaches a state $(x, k+1)$ for some $k \in \N$, and some $x \in \Xx_{VF}$.
We now define CBBCs over the original system by taking this modified system's dynamics into account.
\begin{definition}
\label{def:cbar}
    A function $\Bb: \Xx \times [0..k] \to \R$ is a co-B\"uchi barrier certificate (CBBC) for a system $\Sys$ with set of states $\Xx_{VF}$ that must be visited only finitely often if there exists some $k \in \N$ such that: 
    \begin{align}
        & \Bb(x,0) \leq 0,  \text{ for all } x \in \Xx_0, \label{eq:cbar_cond_1} \\
        & \VM{\Bb(x, k) > 0,\text{ for all } x \in \Xx_{VF}}, \label{eq:cbar_cond_3} 
    \end{align}
and for all values $i \in [0..k]$, and \VM{$j \in [0..(k-1)]$} we have:
\begin{align}
            & \Bb(f(x),i) - \Bb(x,i) \leq 0,  \VM{\text{ for all } x \notin \Xx_{VF}} \text{ and, } \label{eq:cbar_cond_4} \\
        & \Bb(f(x),j+1) - \Bb(x,j) \leq 0, \VM{\text{ for all  } x \in \Xx_{VF}}. \label{eq:cbar_cond_5} 
\end{align}
\end{definition}
\begin{theorem}
\label{thm_fin_visit}
    Consider a system $\Sys = (\Xx, \Xx_0, f)$, with a region $\Xx_{VF}$. Suppose there exists a $k \in N$, and a function $\Bb:\Xx \times [0..k] \to \R$ as in Definition \ref{def:cbar}. Then the traces of the system visits the region $\Xx_{VF}$ at most $k$ times.
\end{theorem}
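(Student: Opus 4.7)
The plan is to reduce the claim to a safety argument on the extended system $\Sys'$ described just before Definition~\ref{def:cbar}. Conditions~\eqref{eq:cbar_cond_1}--\eqref{eq:cbar_cond_5} are tailored so that $\Bb$ behaves like a barrier certificate on $\Sys'$ with respect to the unsafe set $\Xx_{VF} \times \{k+1\}$. Given any state sequence $\langle x_0, x_1, \ldots \rangle$ of $\Sys$, I lift it to the corresponding extended sequence $\langle (x_0, c_0), (x_1, c_1), \ldots \rangle$, where $c_i$ counts the visits to $\Xx_{VF}$ among $x_0, \ldots, x_i$ as defined by the counter dynamics in~\eqref{eq:counter_trans}. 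The goal is to show, by induction on $i$, the invariant that $c_i \leq k$ and $\Bb(x_i, c_i) \leq 0$.

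For the base case $i = 0$, I split on membership: if $x_0 \notin \Xx_{VF}$ then $c_0 = 0$ and condition~\eqref{eq:cbar_cond_1} gives $\Bb(x_0, 0) \leq 0$; otherwise $c_0 = 1$ and condition~\eqref{eq:cbar_cond_2} gives $\Bb(x_0, 1) \leq 0$ (with $c_0 = 1 \leq k$, since if $\Xx_0 \cap \Xx_{VF} \neq \emptyset$ the two conditions~\eqref{eq:cbar_cond_2} and~\eqref{eq:cbar_cond_3} would be inconsistent for $k = 0$, so necessarily $k \geq 1$). For the inductive step, assume the invariant at $i$ and split on whether $f(x_i) \in \Xx_{VF}$: in the first case $c_{i+1} = c_i$ and condition~\eqref{eq:cbar_cond_4}, applicable because $0 \leq c_i \leq k$, yields $\Bb(x_{i+1}, c_{i+1}) \leq \Bb(x_i, c_i) \leq 0$; in the second case $c_{i+1} = c_i + 1$ and condition~\eqref{eq:cbar_cond_5} likewise yields $\Bb(x_{i+1}, c_{i+1}) \leq 0$.

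It remains to upgrade the value bound into the counter bound. Suppose for contradiction that the trace visits $\Xx_{VF}$ at least $k+1$ times, and let $T$ be the first index with $c_T = k+1$. Then $x_T \in \Xx_{VF}$, so condition~\eqref{eq:cbar_cond_3} gives $\Bb(x_T, k+1) > 0$. On the other hand, applying the inductive step one more time at $i = T-1$ (where $c_{T-1} = k$ is still in the admissible range for~\eqref{eq:cbar_cond_5}) yields $\Bb(x_T, k+1) \leq \Bb(x_{T-1}, k) \leq 0$, a contradiction. Hence $c_i \leq k$ for all $i$, which is exactly the conclusion.

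The main delicate point is keeping track of the range $0 \leq i \leq k$ on which the transition conditions~\eqref{eq:cbar_cond_4}--\eqref{eq:cbar_cond_5} are quantified: I must carry $c_i \leq k$ as part of the inductive invariant, and close the argument at the boundary $c_i = k \to c_{i+1} = k+1$ by invoking~\eqref{eq:cbar_cond_3} against~\eqref{eq:cbar_cond_5} rather than by trying to extend the induction past $k$. Once this bookkeeping is set up, the rest is a routine discrete barrier-style induction.
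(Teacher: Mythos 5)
Your proposal is correct and follows essentially the same route as the paper's proof: lift the state sequence to the counter-extended system, establish $\Bb(x_j, c_j) \leq 0$ by induction using conditions~\eqref{eq:cbar_cond_1}--\eqref{eq:cbar_cond_2} and~\eqref{eq:cbar_cond_4}--\eqref{eq:cbar_cond_5}, and contradict condition~\eqref{eq:cbar_cond_3} at the first index where the counter would reach $k+1$. Your explicit handling of the boundary (carrying $c_i \leq k$ in the invariant and closing at $c_i = k$ via~\eqref{eq:cbar_cond_3} against~\eqref{eq:cbar_cond_5}) matches the paper's argument, which phrases the whole thing as a single proof by contradiction.
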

\begin{proof}
\VM{
    We prove this by contradiction.
    Suppose there exists a state sequence $\langle x_0, x_1, \ldots \rangle $ of system $\Sys$ that visits the region $\Xx_{VF}$ more than $k$ times. 
    In particular, let $t \in \N$ be the index at which the state sequence visits $\Xx_{VF}$ for the $(k+1)$-th time.
    This corresponds to a state sequence $\langle x'_0, x'_1, \ldots \rangle$ for system $\Sys'$ such that $x_t = (x_t, k)$, and $x'_{t+1} = (x_{t+1}, k+1)$.
    Following condition \eqref{eq:cbar_cond_1}, observe that we have $\Bb(x_0,0) \leq 0$.
    For each state $x'_i = (x_i, \ell_i)$, either $x_i \in \Xx_{VF}$, in which case, we have $x'_{i+1} =(x_{i+1}, \ell_i + 1) $, and condition \eqref{eq:cbar_cond_5} holds, or $x'_{i+1} = (x_{i+1}, \ell_i)$, and condition \eqref{eq:cbar_cond_4} holds via induction for any $i \in \N$.
    Thus, we have $\Bb(x_i,\ell_i) \leq 0$ for all $i \in \N$.
    Therefore $\Bb(x_{t}, k) \leq 0$, but this contradicts with condition~\eqref{eq:cbar_cond_3} which requires $\Bb(x_i, k+1) > 0$.
    \qed
}
\end{proof}
Observe that if we set $k = 0$ and consider $\Xx_u = \Xx_{VF}$, we recover the original formulation of barrier certificates as in Definition~\ref{def:bar}.  Here conditions~\eqref{eq:cbar_cond_1} corresponds to condition~\eqref{eq:bar_cond_1} over the initial set of states, condition~\eqref{eq:cbar_cond_3} corresponds to condition~\eqref{eq:bar_cond_2} over $\Xx_u$, and
conditions~\eqref{eq:cbar_cond_4} corresponds to condition~\eqref{eq:bar_cond_3}.
We now show how CBBCs are not only sufficient but also necessary with respect to oracles.
To do so we define a notion of oracle functions that keep track of whether a state is visited or not, and another oracle function to determine the number of times a state visits $\Xx_{VF}$.
\VM{For easier presentation, we first define these functions as follows. Let oracle function $\Oo: \Xx \to 2^\Xx$ be defined as  $\Oo(x) = \{f^i(x) \mid i \in \N \}$ for all states $x \subseteq \Xx$.
To keep track of number of visits, we define oracle function $\Orac: \Xx \to {\N \cup \{\infty \}}$ for any $x \in \Xx$ as:
\begin{itemize}
  \item $\Orac(x) :=  \infty$ , if, for all $i \in \N$, there exists $j \in \N_{\geq i}$ such that $f^j(x) \in \Xx_{VF}$.
  \item  $\Orac(x) := j$ if there are at most $j$-indices $i_1, \ldots, i_j$ such that $f^{i_k}(x) \in \Xx_{VF}$, and for all indices $\kappa \notin \N \setminus \{i_1, \ldots, i_j \}$, we have $f^\kappa(x) \notin \Xx_{VF}$. Note that $i_1 = 0$ is a valid index. That is, if $x \in \Xx_{VF}$, we have $ \Orac(x_0) \in \N_{\geq 1} \cup \{ \infty\}$ as $f^0(x) = x$.
  \item $\Orac(x) := 0$ if for all $i \in \N$, we have $f^i(x) \notin \Xx_{VF}$.
\end{itemize}
}
\VM{
Intuitively, the above oracle functions  $\Oo$ and $\Orac$ determine if some state is reachable, as well as the maximum number of times that the state $x$ visits the set $\Xx_{VF}$.
observe that the set $Reach := \underset{x_0 \in \Xx_0}{\bigcup}\Oo(x_0)$ is the set of states that are reachable from the initial state set while $\Orac(x)$ for any $x \in Reach$ denotes the number of times $x$ visits the set $\Xx_{VF}$.
}
\VM{
\begin{lemma}
\label{lem:rel_comp_cbbc}
    Consider a system $\Sys = (\Xx, \Xx_0, f)$ with a set of states $\Xx_{VF}$ that is visited only finitely often.
    Then there exists $k \in \N$, and a CBBC $\Bb: \Xx \times [0..k] \to \R$ as in Definition \ref{def:cbar}.
\end{lemma}
\begin{proof}
First, we bound the maximum number of visits to $\Xx_{VF}$ to determine the value of $k \in \N$ as follows:
Let $k = \max \{j \mid j := \Orac(x) \text{ and } x \in Reach \}$ denote the maximum number of times some state $x \in \Xx_{VF}$ may be reached from the initial set of states.
Observe that there cannot exist $k' \in \N_{> k}$ such that $k' = \Orac(x)$ for some $x \in Reach$ and so $k$ provides the maximum bound on the number of times a state $\Xx_{VF}$ may be reached.
We now define the CBBC $\Bb: \Xx \times [0..k] \to \R$  as:
$$\Bb(x,i) = \begin{cases}
    0 & \text{ if } x \in Reach, \text{ and } i + \Orac(x) \leq k, \\
    1 & \text{ otherwise. }  
\end{cases}$$
We now show that $\Bb$ is a CBBC.
Observe that condition \eqref{eq:cbar_cond_1} holds for any state $x_0 \in \Xx_0$, as $0 + \Orac(x_0) \leq k$ from our assumption that $k$ is the maximum number of times a state in $\Xx_{VF}$ may be reached.
For condition \eqref{eq:cbar_cond_3}, let us assume that $\Bb(x_{vf},k) \leq 0$ for some $x_{vf} \in \Xx_{VF}$.
Then, we must have $k + \Orac(x_{Vf}) \leq k$.
However, we have $\Orac(x_{vf}) \in \N_{ \geq 1} \cup \{\infty\}$, and so  we have 
$k + \Orac(x_{vf}) \leq k+1 \leq k$.
This is a contradiction, and so condition \eqref{eq:cbar_cond_3} must also hold.
For condition \eqref{eq:cbar_cond_4} to fail, it must be the case that $\Bb(x,i) = 0$ for some $x\in \Xx_{VF}$, and $\Bb(f(x), i+1) = 1$.
If $\Bb(f(x), i+1) = 1$, then either $f(x) \notin Reach$, or $i+1 + \Orac(f(x_{vf})) > k$.
However, we cannot have $x\in Reach$, and $f(x) \notin Reach$ by definition of Reach.
The only other option is $i+1 + \Orac(f(x)) > k$.
However, we know that $i + \Orac(x) \leq k$ as $\Bb(x,i) = 0$, and furthermore $\Orac(x) \in \N_{\geq 1}$, as $x = f^0(x)$.
Thus $\Orac(x) = t + 1$ for some $t \in \N$, and so we must have $i + t + 1 \leq k$.
Observe that if this is true, there are at most $t+1$ indices such that $f^{i_k}(x) \in \Xx_{VF}$ for some $i_1, i_2, \ldots i_{t+1} \in \N$, and in particular, we must have $i_1 = 0$ as $x\in \Xx_{VF}$. Furthermore, $f^{\ell}(x) \notin \Xx_{VF} $ for any $\ell \notin \{0, i_2 \ldots i_{t+1} \}$.
Thus, $f(x)$, has only $t$ indices $\{(i_2 - 1), \ldots, (i_{t+1} - 1) \}$ such that $f^{i_{k-1}}(f(x)) \in \Xx_{VF}$, and so we must have $\Orac(f(x)) = t$.
This is a contradiction as we cannot have $i + 1 + t > k$ from $\Bb(f(x),i) = 1$, and $i + t + 1 \leq k$ from $\Bb(x,i) = 0$.
A similar contradiction arises for condition \eqref{eq:cbar_cond_5}, however the values of $\Orac(x)$, and $\Orac(f(x))$ are the same if $x \notin \Xx_{VF}$.
\qed
\end{proof}
}

\section{Verifying Automata Properties}\label{sec:aut_verif}
We now extend the results in Section~\ref{sec:finite_visits} to verify properties specified by an automaton.
Let $\Sys = (\Xx, \Xx_0, f)$ be a dynamical system, $\Sigma$ be a finite alphabet and $\Ll:\Xx \to \Sigma$ be a labeling function. Consider $\Aa_{c,k} = (\Sigma,Q, Q_0, \delta, Q_{Acc})$ to be a $k$-UCA automaton representing a specification.
Then the goal of verification is to show that $TR(\Sys, \Ll) \subseteq L(\Aa_{c,k})$.
Observe that any trace that is accepted by the automaton visits an accepting state at most $k$-times.
This provides our approach for verifying dynamical systems against properties specified by automata.

We first take a synchronous product between the system $\Sys$ and automaton $\Aa_{c,k}$.
As the state set of the automaton is finite, we can equivalently represent the states of the automaton as  natural numbers in the set $ \{0, 1, \ldots, (|Q| - 1) \} $, \VM{represented succinctly as $[0..(|Q|-1)]$}.
We define the product as a tuple $\hSys = (\hX , \hX_0, \hf, \hX_{Acc}  )$, 
where $\hX = X \times [0..(|Q|-1)]$ denotes the state set, the initial set of states is defined as  $\hX_0 = \Xx_0 \times \{ i \mid i \in [0..(|Q|-1) \cap Q_0 \}$.
Observe that while the dynamical system has a unique successor state for any state $x \in \Xx$, the automata may be nondeterministic and so the state transition map is not a function but rather a relation for the product that is also nondeterministic. 
The state transition relation $\hf \subseteq \hX \times \hX$ is defined as follows for all $x \in \Xx$ and $i,j \in [0..(|Q|-1)]$, as 
$\hf((x,i)) = \{ (f(x), j) \mid  j \in \delta(i, \Ll(x)) \}$.
Given elements $x \in \Xx$, and $i \in [0..(|Q|-1)]$, the set $\hf((x,i))$ has at most $(|Q| -1)$ elements as $j \in [0..(|Q|-1)]$.
For convenience we use $\hX_{Acc} = \Xx \times \{ \ell \mid \ell \in Q_{Acc} \}$ to denote the set of accepting states of this product.

\label{subsec:kUCA}
A state sequence $\langle x_0, x_1 \ldots \rangle$ of system $\Sys$ is accepted by the automaton $\Aa_{c,k}$ if every corresponding state sequence $\langle \hx_0, \hx_1, \ldots \rangle $ has at most $k$-states in $\hX_{Acc}$, \textit{i.e.}, there exists $i_1, i_2, \ldots, i_{k} \in \N$, with $\hx_{i_j} \in \hX_{Acc}$, and for all $\ell \notin \{i_1, i_2, \ldots, i_{k} \}$ , we have $\hx_{\ell} \notin \hX_{Acc}$.
Our approach to verify whether a system satisfies the specification is to determine whether all traces of the product $\hSys$ visits states in the set $\hX_{Acc}$ at most $k$ times. 
If so then the system $\Sys$ satisfies the specification, and if not our approach is inconclusive.
To determine this we make use of a CBBC on the product system.

To determine this we make use of a CBBC on the product system defined as follows.
\begin{definition}
\label{def:cbar_prod}
    A function $\Bb: \Xx \times Q \times [0..k] \to \R$ is a co-B\"uchi barrier certificate for the synchronous product $\hSys$ of a system $\Sys$ and a $k$-UCA $\Aa_{c,k}$ if: 
    \begin{align}
        & \Bb(x,\ell, 0) \leq 0, \quad \text{ for all } x \in \Xx_0, \ell \in Q_0 \setminus Q_{Acc}, \label{eq:prod_cond_1} \\
        & \VM{\Bb(x,\ell, k) > 0, \; \text{ for all } x \in \Xx, \ell \in Q_{Acc}},  \label{eq:prod_cond_3} 
        \end{align}
    and for all states $x \in \Xx$, counter values $i \in [0..k]$, \VM{$j \in [0..(k-1)]$}, and all states $\ell \in [0..(|Q|-1)]$, and all states $\ell' \in \delta(\ell , \Ll(x))$, we have
    \begin{align}
        & \underset{\ell'}{\max} \Big\{ \Bb(f(x), \ell', i)  \Big \}\leq \Bb(x,\ell, i),  \VM{\text{ if } \ell \notin Q_{Acc}}, \label{eq:prod_cond_4} \\
        & \underset{\ell'}{\max} \Big \{ \Bb(f(x), \ell', j+1)  \Big\}\leq \Bb(x,\ell, j){,} \VM{\text{ if } \ell' \in Q_{Acc}}. \label{eq:prod_cond_5}
    \end{align}
\end{definition} 

\begin{theorem}
\label{thm:barrier_UCA}
Consider a system $\Sys = (\Xx, \Xx_0, f)$, a $k$-UCA $\Aa_{c,k} = (\Sigma,Q, Q_0, \delta, Q_{Acc})$, and a labeling map $\Ll: \Xx \to \Sigma$.
Suppose there exists $k \in \N$, and a function $\Bb: \Xx \times Q \times [0..k]$ as in Definition \ref{def:cbar_prod}, then $TR(\Sys, \Ll) \subseteq L(\Aa_{c,k})$.
\end{theorem}
\begin{proof}
We observe that the function $\Bb: \Xx \times Q \times [0..k] \to \R$ satisfying conditions~\eqref{eq:prod_cond_1} to~\eqref{eq:prod_cond_5} is a CBBC as in Definition~\ref{def:cbar} for the product $\hSys$ where the set of states that must be visited only finitely often $\hX_{VF}$ is $ \hX_{Acc}$.
Following Theorem~\ref{thm_fin_visit}, we note that every state sequence of of system $\hSys$ must visit the region $\hX_{Acc}$ at most $k$-times, therefore every trace of $\Sys$ under labeling map must be in the language of the automaton $\Aa_{c,k}$.
\qed
\end{proof}

To verify whether the traces of $\Sys$ under a labeling function $\Ll$ is accepted by a UCA $\Aa_{c} = (\Sigma,Q, Q_0, \delta, Q_{Acc})$, we search for a suitable $k$ such that the traces of $\Sys$ are in the language of the $k$-UCA $\Aa_{c,k} = (\Sigma,Q, Q_0, \delta, Q_{Acc})$ following Corollary~\ref{cor:kUCA_UCA}.
This is then cast as searching for a suitable CBBC for the product.
If one is successful in finding such a CBBC for some $k \in \N$, then we infer the traces of the system to be in the language of the automaton.
Here, the value $k$ is initially set to $0$, and is increased incrementally by one, every time we fail to find a CBBC until a suitable threshold is reached.
If a CBBC is found before the threshold is reached we stop our search and guarantee that the traces of the system are in the language of $\Aa_c$ and if not our procedure is inconclusive. 
Note that one \emph{does not need} to take the product of the system and the automaton explicitly to check the satisfaction of conditions and can define functions that are piecewise over automata states similar to \cite{murali_2023_closure}.
To verify whether the traces of a system $\Sys$ under a labeling function $\Ll$ are not contained in the language of an NBA $\Aa_b = (\Sigma,Q, Q_0, \delta, Q_{Acc})$, 
We first determine whether the traces of the system $\Sys$ are contained in the language of the UCA $\Aa_c =(\Sigma,Q, Q_0, \delta, Q_{Acc})$ as specified in the previous paragraph.
If so, then no trace of the system is in the language of the NBA $\Aa_b$ and if not our procedure is inconclusive.

\subsection{Limitations of CBBCs}
\VM{
While, we have shown that CBBCs are not only sufficient, but also necessary conditions to verify $k$-UCA properties they unfortunately, do not lend themselves readily to synthesis.
The key challenge for synthesis of controllers against $\omega$-regular specifications is that one may not use nondeterministic or universal automata due to issues related to realizability stemming from the nondeterminism or universality \cite{pnueli1989synthesis,chatterjee2013automata}.
Thus, one must instead use deterministic variants of these automata, however deterministic B\"uchi automata cannot capture all $\omega$-regular specifications.
Instead one must use deterministic $\omega$-automata such as parity or rabin automata \cite{baier_2008_principles}. However, the accepting conditions of these automata need to show not only that a region is visited finitely often but also that some regions are visited infinitely often.
A direct application of CBBCs to show regions are visited infinitely often, leads to one having to deal with unbounded counters making them a poor choice for automated approaches.
However, CBBCs may provide an effective tool to synthesize controllers for classes of $\omega$-automata such as as good-for-games automata \cite{henzinger2006solving}.
We leave an investigation of this as future work.
}

\section{CBBC Computation}
\label{sec:comp}
We present two approaches to compute co-B\"uchi barrier certificates as in Definition~\ref{def:cbar} .
The first relies on using Satisfiability Modulo Theory (SMT) solvers, while the second is based on a sum-of-squares (SOS) approach.
To find a CBBC as in Definition~\ref{def:cbar}, we first fix the template of the CBBC to be a piece-wise polynomials of a fixed degree $d$ with unknown coefficients $c_1, \ldots, c_z$, \textit{i.e.} $\Bb_i(x) = \sum\limits_{m = 1}^{z} c_{m,i} p_{m,i}(x)$ where the functions $p_{m}$ are monomials over the states $x$ and function $\Bb_i$ are defined for each counter value $i \in [0..k]$.
As $i \in [0..k]$ takes only finitely many values, we consider $\Bb$ to be a piecewise function with $k$ pieces.
We now discuss how we make use of a sum-of-squares approach to search for CBBCs.

\subsection{Sum-of-Squares based Approach}
One may make use of a semidefinite programming approach~\cite{Parrilo_2003} via sum-of-squares (SOS) similar to a search for a standard barrier certificates~\cite{prajna_2004_safety}.
A  set $A \subseteq \R^n$ is semi-algebraic if it  can be defined with the help of a vector of polynomial inequailites $h(x)$ as $A = \{ x \mid h(x) \geq 0 \}$, where the inequalities is interpreted component-wise.

To adopt a SOS approach to find CBBCs as in Definition~\ref{def:cbar}, we consider the sets $\Xx$, $\Xx_0$, and $\Xx_{VF}$ to be semi-algebraic sets defined with the help of vectors of polynomial inequalities $g$, $g_0$, and $g_{VF}$ respectively.
As these sets are semi-algebraic, the set $\Xx \setminus \Xx_{VF}$ is semi-algebraic as well, and let this set be defined by polynomial inequalities $g_N$.
Furthermore, we assume that the function $f$ is polynomial.
Then the search for a CBBC reduces to showing that the following polynomials are sum-of-squares for all $i \in [0..k]$, and $j \in [0..(k-1)]$:

\begin{align}
  & -\Bb_{0}(x)-\lambda_{0}^T(x)g_{0}(x),  \label{eq:sos_cbar_1} \\
  & \Bb_{k}(x) - \lambda_{VF}^T(x)g_{VF}(x) - \varepsilon \label{eq:sos_cbar_3} \\
  & -\Bb_{i}( f(x)) + \Bb_{i}(x) - \lambda_{N,i}^T(x)g_{N}(x), \text{ and }  \label{eq:sos_cbar_4}\\
  & -\Bb_{(j+1)}( f(x)) + \Bb_{(j)}(x) - \lambda_{V,i}(x)^Tg_{VF}(x), \label{eq:sos_cbar_5}
\end{align}
where the multipliers $\lambda_{0}$, $\lambda_{VF}$, $\lambda_{N,i}$, and $\lambda_{V, i}$ are sum-of-squares over $\Xx$, and $\varepsilon \in\R_{ > 0}$ is some small positive value.

\begin{lemma}
Assume the sets $\Xx$, $\Xx_{0}$, and $\Xx_{VF}$ are semi-algebraic, and there exists $k$ sum-of-squares polynomials $\Bb_\ell(x)$  satisfying conditions~\eqref{eq:sos_cbar_1} to~\eqref{eq:sos_cbar_5} for all $0 \leq \ell \leq k$.
Then the function $\Bb(x,i) = \Bb_i(x)$ is a CBBC  satisfying conditions~\eqref{eq:cbar_cond_1} to~\eqref{eq:cbar_cond_5}.
\end{lemma}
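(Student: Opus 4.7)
The plan is a standard Positivstellensatz-flavored argument: each SOS decomposition \eqref{eq:sos_cbar_1}--\eqref{eq:sos_cbar_5} directly yields the corresponding inequality on the appropriate semi-algebraic set. The single tool I would invoke repeatedly is that (a) any SOS polynomial is non-negative on $\R^n$, and (b) if a vector $g$ of polynomials defines $A = \{x : g(x) \geq 0\}$ componentwise and $\lambda$ is a vector of SOS polynomials, then $\lambda^T(x) g(x) \geq 0$ for every $x \in A$.

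First I would establish the three state conditions \eqref{eq:cbar_cond_1}--\eqref{eq:cbar_cond_3}. For \eqref{eq:cbar_cond_1}, SOS-ness of \eqref{eq:sos_cbar_1} gives $-\Bb(x,0) \geq \lambda_{0,N}^T(x,i)\, g_{0,N}(x)$ everywhere, and the right-hand side is non-negative on $\Xx_0 \setminus \Xx_{VF}$, so $\Bb(x,0) \leq 0$ there. Condition \eqref{eq:cbar_cond_2} follows identically on $\Xx_0 \cap \Xx_{VF}$ via \eqref{eq:sos_cbar_2}. For \eqref{eq:cbar_cond_3}, I would rearrange \eqref{eq:sos_cbar_3} into $\Bb(x,k+1) \geq \lambda_V^T(x,i)\, g_V(x) + \varepsilon$ and restrict to $\Xx_{VF}$, obtaining $\Bb(x,k+1) \geq \varepsilon > 0$; the slack $\varepsilon$ is exactly what converts the non-strict non-negativity supplied by SOS into the strict inequality required by the definition.

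Next I would treat the two transition conditions \eqref{eq:cbar_cond_4}--\eqref{eq:cbar_cond_5} pointwise in $\ell$. For each $0 \leq \ell \leq k$, SOS-ness of \eqref{eq:sos_cbar_4} yields $\Bb(x,\ell) - \Bb(f(x),\ell) \geq \lambda_N^T(x,i)\, g_N(f(x))$ for all $x$, and the right-hand side is non-negative precisely when $f(x) \in \Xx \setminus \Xx_{VF}$, i.e., when $f(x) \notin \Xx_{VF}$. Condition \eqref{eq:cbar_cond_5} follows by the identical argument applied to \eqref{eq:sos_cbar_5} on $\{x : f(x) \in \Xx_{VF}\}$, using the defining inequalities of $\Xx_{VF}$ evaluated at $f(x)$.

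There is no deep mathematical obstacle; the proof is essentially bookkeeping once the Positivstellensatz-style implication is in place. The one subtlety worth flagging is that each multiplier $\lambda$ depends on the counter variable $i$, which ranges over $\N$ but is treated as a real indeterminate in the SOS decomposition. Since SOS-ness in $\R[x,i]$ implies non-negativity under every real substitution for $i$, and hence every natural substitution, each pointwise inequality of Definition~\ref{def:cbar} is recovered at the relevant counter value.
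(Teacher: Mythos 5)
Your proof is correct; note that the paper states this lemma without any proof at all, so there is nothing in the source to compare it against. Your argument --- global non-negativity of each SOS expression, plus non-negativity of the multiplier terms $\lambda^{T}g$ on the corresponding semi-algebraic sets (evaluated at $f(x)$ for the transition conditions~\eqref{eq:sos_cbar_4} and~\eqref{eq:sos_cbar_5}), with the $\varepsilon$ slack supplying the strict inequality in~\eqref{eq:cbar_cond_3} --- is exactly the standard Lagrangian-relaxation soundness argument the authors implicitly rely on (the same one used for classical barrier certificates), and your closing remark about the counter variable $i$ being an unconstrained real indeterminate in the SOS decomposition correctly addresses the only genuine subtlety.
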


To find an SOS formulation for the product, first observe that there are finitely many letters $\sigma \in \Sigma$, and so without loss of generality, one can partition the set $\Xx$ into finitely many partitions $\Xx_{\sigma_1}, \ldots, \Xx_{\sigma_p}$ where for all $x \in \Xx_{\sigma_m}$, we have $\Ll(x) = \sigma_m$.
Given an element $\sigma_m \in \Sigma$, we can uniquely characterize the relation $\delta_{\sigma_m}$ as $(\ell, \ell') \in \delta_{\sigma_m}$ if and only if  $\ell' \in \delta(\ell, \sigma_m)$.
Assume that the sets $\Xx$, $\Xx_{0}$, and $\Xx_{\sigma_m}$ for all $\sigma_m$ are semi-algebraic and characterized by polynomial vector of inequalities $g(x)$, $ g_0(x)$, and $g_{\sigma_m}(x)$, respectively. Furthermore, assume that the state transition function $f:\Xx \to \Xx$ is polynomial. 
Then similar to the conditions above, one can reduce a search for CBBCs to checking if relevant polynomials are SOS.
To do so, we assume that the CBBC $\Bb(x,\ell,i) $ is defined as piecewise functions $\Bb_{\ell, i}(x)$ for each state of the automaton $\ell \in Q$,as there are finitely many states in $Q$, and similarly, for a fixed $k$, $i$, and $j$ take on value from $[0..k]$ and $[0..(k-1)]$ repsectively.
 Now, one can reduce the search for a CBBC to showing that the following polynomials are SOS:
\begin{align}
  & -\Bb_{\ell,0}(x)-\lambda_{0}^T(x)g_0(x), \text{ for all }\ell \in Q_0,  \label{eq:sos_1} \\
  & \Bb_{\ell, k}(x) - \lambda^T_{1}(x)g(x) - \varepsilon, \text{ for all } \ell \in Q_{Acc}, \label{eq:sos_3} 
\end{align}
where $\varepsilon \in \R_{>0}$ is some small positive value,
and for all $\sigma_{m} \in \Sigma$, $i \in [0..k]$ and all $\ell \in Q \setminus Q_{Acc}$, $\ell' \in \delta(\ell, \sigma_m)$ the following polynomial is sum-of-squares:
\begin{align}
 & - \Bb_{\ell, i}(x) +   \Bb_{ \ell', i}(f(x)) - \lambda_{2,\sigma_m,\ell, i}^T(x) g_{\sigma_m}(x),  \label{eq:sos_4}
 \end{align}
and for all $\sigma_{m} \in \Sigma$, $j \in [0..(k-1)]$ and all $\ell \in Q \cap Q_{Acc}$, $\ell' \in \delta(\ell, \sigma_m)$ the following polynomial is sum-of-squares:
 \begin{align}
  & - \Bb_{\ell,j}(x) +   \Bb_{\ell, (j+ 1)}(f(x)) - \lambda_{3,\sigma_m,\ell, i}^T(x) g_{\sigma_m}\!(x), \label{eq:sos_5}
\end{align}
where the multipliers $\lambda_0, \lambda_1$, $\lambda_{2,\sigma_m,\ell, i}$ $\lambda_{3,\sigma_m,\ell, i}$ are all sum-of-squares polynomials over the states $\Xx$.
Note we enumerate the conditions for every  $\sigma_m \in \Sigma$ and explicitly specify the values $\ell'$ based on the relation $\delta_{\sigma_m}$.

\begin{lemma}
Assume the sets $\Xx$, $\Xx_{0}$, $\Xx_{VF}$, and $\Xx_{\sigma_i}$ for all $\sigma_i$ are semi-algebraic, and there exists sum-of-squares polynomials $\Bb_{\xi, \ell}(x)$ satisfying conditions~\eqref{eq:sos_1} to~\eqref{eq:sos_5} for all $\xi \in Q$, and $\ell \in [0..k]$.
Then the function $\Bb(x, \xi, \ell) = \Bb_{\xi, \ell}(x)$ is a CBBC for the product satisfying conditions~\eqref{eq:prod_cond_1} to~\eqref{eq:prod_cond_5}.
\end{lemma}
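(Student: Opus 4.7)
The plan is to apply a standard Positivstellensatz-flavored argument: if a polynomial can be written as $p(z) - \lambda^T(z)\, g(z)$ where this expression is a sum-of-squares and $\lambda$ is itself a vector of sum-of-squares polynomials, then $p(z) \ge 0$ on the semi-algebraic set $\{z : g(z) \ge 0\}$. This holds pointwise because at any $z$ in that set, $g(z) \ge 0$ componentwise and $\lambda(z) \ge 0$ componentwise, so $\lambda^T(z) g(z) \ge 0$, and the SOS part is $\ge 0$ everywhere; rearranging yields $p(z) \ge \lambda^T(z)g(z) \ge 0$. I would apply this template to each of the conditions~\eqref{eq:sos_1}--\eqref{eq:sos_5} in turn, instantiating $z = (x,\xi,\ell)$ and reading off the corresponding condition of Definition~\ref{def:cbar_prod}.

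First I would handle the three ``state'' conditions. From \eqref{eq:sos_1}, for every $x \in \Xx_0$ (so $g_0(x) \ge 0$) and every $q_i \in Q_0 \setminus Q_{Acc}$, the template yields $-\Bb(x,i,0) \ge 0$, which is \eqref{eq:prod_cond_1}. The analogous argument on \eqref{eq:sos_2} gives \eqref{eq:prod_cond_2}. For \eqref{eq:sos_3}, the template yields $\Bb(x,i,k{+}1) \ge \lambda^T g(x) + \varepsilon \ge \varepsilon > 0$ for $x \in \Xx$ and $q_i \in Q_{Acc}$, which is the strict inequality~\eqref{eq:prod_cond_3}; the role of $\varepsilon > 0$ is precisely to convert non-negativity (which SOS gives) into strict positivity.

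Next I would handle the two transition conditions~\eqref{eq:sos_4} and~\eqref{eq:sos_5}. The key observation is that the labeling map partitions $\Xx$ into $\Xx_{\sigma_1}, \ldots, \Xx_{\sigma_p}$, so every $x \in \Xx$ lies in exactly one cell $\Xx_{\sigma_m}$, and on that cell $\Ll(x) = \sigma_m$, so the successors of $(x,\hat\xi)$ in the product $\hSys$ are precisely the pairs $(f(x), j)$ with $q_j \in \delta(q_{\hat\xi}, \sigma_m)$. Applying the Positivstellensatz template to \eqref{eq:sos_4} for a fixed $\sigma_m$, $\hat\xi$, and $j \in \delta(q_{\hat\xi}, \sigma_m) \setminus Q_{Acc}$, for every $x \in \Xx_{\sigma_m}$ we obtain $\Bb(f(x),j,\ell) \le \Bb(x,\hat\xi,\ell)$. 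Since this holds for each such $j$ individually, it holds for the maximum over all such $j$, which is exactly the left-hand side of~\eqref{eq:prod_cond_4}. The same argument applied to~\eqref{eq:sos_5} yields~\eqref{eq:prod_cond_5}. Taking the union over all $\sigma_m \in \Sigma$ covers every $x \in \Xx$.

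The only subtlety, and the main place to be careful, is the bookkeeping between the SOS indexing (one constraint per triple $(\sigma_m, \hat\xi, j)$ with $q_j \in \delta(q_{\hat\xi}, \sigma_m)$) and the CBBC formulation (a $\max$ over the image of a single successor set $\hf((x,\hat\xi))$). Once one observes that the partition $\{\Xx_{\sigma_m}\}$ is a disjoint cover of $\Xx$ and that the finitely many enumerated transitions $q_j \in \delta(q_{\hat\xi}, \sigma_m)$ exhaust $\hf((x,\hat\xi))$ when $x \in \Xx_{\sigma_m}$, the max on the CBBC side reduces to a finite conjunction of SOS-derived pointwise inequalities and the two definitions line up. No further analysis is needed, and the lemma follows.
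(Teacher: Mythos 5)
The paper states this lemma without proof, so there is no argument of its own to compare against; what you give is the standard generalized S-procedure justification that the authors evidently intend. For conditions \eqref{eq:sos_1}, \eqref{eq:sos_2}, and \eqref{eq:sos_3} your argument goes through exactly as written (with $\varepsilon>0$ supplying the strictness needed for \eqref{eq:prod_cond_3}), and your bookkeeping for the transition conditions is the right reduction: localize to the cells $\Xx_{\sigma_m}$ of the partition induced by $\Ll$, note that the enumerated pairs $(\hat\xi, j)$ with $q_j \in \delta(q_{\hat\xi},\sigma_m)$ exhaust the successor set $\hf((x,\hat\xi))$ for $x \in \Xx_{\sigma_m}$, and replace the $\max$ in \eqref{eq:prod_cond_4}--\eqref{eq:prod_cond_5} by a finite conjunction of per-$j$ pointwise inequalities.

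There is one step, however, where your claimed conclusion does not follow from the condition as printed. Applying your Positivstellensatz template to \eqref{eq:sos_4}, which reads $-\Bb(x,\hat\xi,\ell) + \Bb(f(x),j,\ell) - \lambda_{\sigma_m}^T(x,\xi,\ell)\, g_{\sigma_m}(x)$ being SOS, yields $\Bb(f(x),j,\ell) \ge \Bb(x,\hat\xi,\ell)$ on $\Xx_{\sigma_m}$ --- the \emph{reverse} of the inequality $\Bb(f(x),j,\ell) \le \Bb(x,\hat\xi,\ell)$ that you assert and that \eqref{eq:prod_cond_4} requires; the same mismatch occurs between \eqref{eq:sos_5} and \eqref{eq:prod_cond_5}. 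Comparing with the single-system conditions \eqref{eq:sos_cbar_4}--\eqref{eq:sos_cbar_5}, which carry the opposite signs on the two $\Bb$ terms and do yield the required decrease, this is almost certainly a sign typo in the paper's display rather than a flaw in your reasoning. Still, a complete proof must either flag the discrepancy or work from the corrected form $\Bb(x,\hat\xi,\ell) - \Bb(f(x),j,\ell) - \lambda_{\sigma_m}^T(x,\xi,\ell)\, g_{\sigma_m}(x)$; as literally stated, \eqref{eq:sos_4}--\eqref{eq:sos_5} do not imply \eqref{eq:prod_cond_4}--\eqref{eq:prod_cond_5}, so your sentence ``we obtain $\Bb(f(x),j,\ell) \le \Bb(x,\hat\xi,\ell)$'' silently substitutes the intended condition for the stated one.
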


One can then encode conditions \eqref{eq:sos_cbar_1} to~\eqref{eq:sos_cbar_5} as well as conditions~\eqref{eq:sos_1} to~\eqref{eq:sos_5} in a suitable SOS solver such as TSSOS~\cite{wang2021tssos} by fixing the barrier certificate to be a polynomial of degree $d$ and checking if the above constraints are satisfiable.

\VM{
The time complexity of checking if a polynomial in $n$ variables of degree $d$ is SOS involves solving a semidefinite feasibility problem in $\binom{n+d}{d}$ variables \cite{Parrilo_2003}. 
For the above approach, we have $O(k)$ polynomial constraints, as we can define function $\Bb$ to be a set of $k$-functions that are piecewise over the values of $k$.
Thus, computing a CBBC is polynomial in $O(k\binom{n+d}{d})$.
When dealing with the product of the system and a $k$-UCA $\Aa = (\Sigma,Q,Q_0,\delta, Q_{Acc})$, this is now polynomial in $O(k|Q|\binom{n+d}{d})$ variables.
}
\VM{We summarize the Algorithm for the SOS-based approach to verify UCA specifications in Algorithm \ref{alg:CBBC}.}

\begin{algorithm}
\VM{
\caption{CBBC computation for verifying $k$-UCA specifications}
\label{alg:CBBC}
\begin{algorithmic}[H]
\Procedure{CBBC\_Comp($\Sys$, $\Aa_c$, $\Ll$, $k_{\max}$, $d$) }{} 
\noindent \State{\textbf{Inputs}: System $\Sys = (\Xx, \Xx_0,f)$,}
\State{\qquad \qquad UCA $\Aa_{c} = (\Sigma, Q, Q_0, \delta, Q_{Acc})$,}
\State{\qquad \qquad Labeling map $\Ll: \Xx \to \Sigma$,} 
\State{\qquad \qquad Maximum counter threshold $k_{\max}$,}
\State{\qquad \qquad Degree of the certificates $d$.}
\State{$k := 0$}
\While{$k \leq k_{max}$  }
\State{Encode SOS polynomials from conditions \eqref{eq:sos_1}-\eqref{eq:sos_5} as an SDP following \cite{prajna_2004_safety}.}
\If{ SDP is feasible }
\State{ \textbf{return} $(status, \{\Bb_{i,\ell} \mid i \in Q, \ell \in [0..k] \})$}
\Else
\State{$k := k + 1$}
\EndIf
\EndWhile
\State{ \textbf{return} inconclusive}
\EndProcedure
\end{algorithmic}}
\end{algorithm}

\subsection{SMT-based Approach}
Alternative to SOS, one may make use of a Counterexample-guided Inductive Synthesis (CEGIS)~\cite{Lezama_2008_thesis} approach to find these CBBCs.
Observe that if the value of $x \in \Xx$ and $i \in \N$ is fixed, then the only decision variables in $\Bb(x,i)$ are the coefficients $c_m$.
We  sample $N$ points from the state space of the system to create the set $S = \{ x_1, \ldots, x_N \}$, and encode the constraints of the barrier certificate for each point $x_{j} \in S$ and all $0 \leq i \leq k$ as an SMT-query over the theory of linear real arithmetic (LRA)~\cite{dutertre_2006_fast} using z3~\cite{moura_2008_z3}.
We then find values $c_1, \ldots, c_z $  for the coefficients and substitute them as a candidate CBBC $\Bb(x,i)$.
To determine if this candidate is a CBBC, we now try to find a $x \in \Xx$ such that one of the conditions~\eqref{eq:cbar_cond_1} to~\eqref{eq:cbar_cond_5} does not hold.
We do this by encoding the negation of these conditions as an SMT query.
If such a counterexample $x_{counter}$ is found, we add it to the set $S$ and repeat the process.
If no such counterexample is found, then we conclude that this function is indeed a CBBC.
We adopt a similar approach to find a CBBC for the synchronized product, that acts as a proof that the traces of the system are contained in the language of a $k$-UCA $\Aa_{c,k} = (\Sigma, Q, Q_0, \delta, Q_{Acc})$.
Formally, we sample $N$ points from the state set $\Xx$, and check the conjunction of the following queries for every $ \ell\in Q$ : 
\begin{align}
&\ \underset{x_j \in \Xx_0, \ell\in Q_0}{\bigwedge} \Big(  \Bb_{\ell, 0}(x_j) \leq 0 \Big), && \label{eq:smt_cprod_1} \\
& \quad  \underset{x_j \in \Xx, \ell \in Q_{Acc} }{\bigwedge} \Big(  \Bb_{\ell, k}(x_j) > 0 \Big), &&\label{eq:smt_cprod_3} 
\end{align}
and for all states $\ell \in [0..(|Q|-1)]$ such that $\ell' \in \delta(q_\ell, \Ll(x_i) )$, and for all counter values $i \in [0..k]$, we have 
\begin{align}
    &\ \underset{x_j \in \Xx_0, \ell \in (Q \setminus Q_{Acc} )}{\bigwedge} \Big(  \Bb_{\ell',i}(f(x_j)) \leq \Bb_{\ell, i}(x_j)  \Big), \label{eq:smt_cprod_4} \\
    &\ \underset{x_j \in \Xx_0, \ell \in Q_{Acc} }{\bigwedge} \Big(  \Bb_{\ell', (i+1)}(f(x_j)) \leq \Bb_{\ell, i}(x_j)  \Big). \label{eq:smt_cprod_5} 
\end{align}

\VM{Unlike SOS, the above approaches are NP-hard~\cite{gao_dreal_2013}, and there is no guarantee that our CEGIS approach will terminate.
}
We have shown the usefulness of CBBCs in Theorem~\ref{thm_fin_visit} and Lemma \ref{lem:rel_comp_cbbc}. Our first  question was to investigate whether classical approaches for computing barrier certificates can be used for CBBCs.
We now show that our approach generalizes the state triplet approaches presented in~\cite{wongpiromsarn_2015_automata}.
\VM{Intuitively, our goal is as follows: we wish to show that if one is able to find a proof that a system $\Sys$ satisfies a desired $\omega$-regular property via NBA $\Aa_b$ through the state-triplet approach, then one is also able to find a proof that $\Sys$ satisfies the same property via CBBCs.
}

\section{Subsuming the state triplet approach}
\label{sec:subsumption}
\VM{Our result here follows from a simple observation that there exist systems which cannot be proved to verify UCA properties via the state-triplet approach, however as CBBCs are not only sufficient but also necessary, one can find such certificates for these systems.
Furthermore, while the above summary is non-constructive, we provide a constructive proof in Appendix \ref{subsec:construct_subsump}, which shows that the degree of our CBBCs are at most the same degree as the barrier certificates obtained from the triplet approach.
}
A similar discussion of such proofs also appears in \cite{murali_2023_closure} in the context of closure certificates and transition invariants rather than barrier certificates.

\subsection{The State Triplet Approach}
\label{subsec:triplet}
Consider a system $\Sys = ( \Xx, \Xx_0, f)$, an NBA $\Aa_b = (\Sigma, Q, Q_0, \delta, Q_{Acc})$, and a labeling function $\Ll: \Xx \to \Sigma$.
We first unroll the simple cycles of the automaton $\Aa_b$ that start from an accepting state to construct an automaton $\Aa'_b = (\Sigma, Q', Q_0, \delta', Q'_{Acc})$ such that $L(\Aa'_b) = L(\Aa_b)$.  
The states $Q'$ of $\Aa'_b$ contain all the states $Q$ of $\Aa_b$ as well as one new state $q'_i$ for every state $q_i \in Q$ that can be reached from an accepting state.
Determining which states are reachable from an accepting state can be done in polynomial time~\cite{kozen_2006_theory} in the number of states of the automaton.
Intuitively, these new states are meant to determine the state runs of the system which have reached an accepting state and continue onward.
The set of initial states is  the same as $\Aa_b$, while the set of accepting states consist of the states $ q'_{Acc}$, \textit{i.e.}, those accepting states that can be reached from an accepting state.
The transitions $\delta'$ are the same as the transitions for $\Aa_{b}$, except for the transitions that leave an accepting state $q_{Acc} \in Q_{Acc}$ and additional transitions added.
The  transitions added are due to two reasons :
\begin{itemize}
    \item For every pair of states $q_i, q_j \in Q$ that are reachable from an accepting state, and for any $\sigma \in \Sigma$ such that $q_j \in \delta(q_i,\sigma) $, we add a transition $q'_j \in \delta(q'_i, \sigma)$.
    This preserves all the transitions for the newly added states so they behave the same as before. 
    \item For every state $q_i \in Q$, and every accepting state $q_{Acc} \in Q_{Acc}$, and any letter $\sigma \in \Sigma$ such that $q_j \in \delta(q_{Acc},\sigma) $, we add a transition $q'_j \in \delta(q_{Acc}, \sigma)$. For every edge that leaves the accepting state, we add an edge to move to the states of the form $q' \in Q' \setminus Q$. Observe that there are no transition from the states in $Q' \setminus Q$ to the ones in $Q$.
\end{itemize}
To ensure that $T(\Sys, \Ll) \cap L(\Aa_{b}) = \emptyset$, it is sufficient to ensure that there exists a barrier certificate between some edge pair (or a state triplet) for every simple path from an initial state $q_0 \in Q_0 $ to some accepting state $q'_{Acc} \in Q'_{Acc}$ in the automaton $\Aa'_b$.
Note that unrolling the cycles  more than once does not change the edge pairs that enter or exit a state, and thus does not change the labels of the state triplets (edge pairs) that are encountered along a simple path from an initial state to an accepting state.
One may perform the same construction on the automaton again, however this does not add any new state triplets to consider.
The proof for this appears in \cite[Lemma 9]{murali_2023_closure}.
Thus, one is unable to verify those traces of a system which reach and cycle on an accepting state more than twice, even if they visit an accepting state only finitely often.
Note that unrolling once does have an impact since we can now consider those state triplets along the simple paths from an accepting state to an accepting state.

To illustrate the state triplet approach, consider the NBA $\Aa_{b_1} = ( \Sigma,Q, Q_0, \delta, Q_{Acc})$ in Figure~\ref{fig:aut_eg_1}, discussed in the introduction, with $\Sigma = \{a, b\}$, $Q = \{q_0, q_1, q_2\}$, $Q_0 = \{q_0 \}$, $Q_{Acc} = \{ q_2 \}$ and $\delta$ specified by the edges in the graph.
This NBA accepts those words which start with $b$ and have at least one $a$ in them and can be used to verify the safety of the system.
We, unfold $\Aa_{b_1}$ to get the NBA $\Aa'_{b_1} = (\Sigma', Q', Q_0, \delta, Q'_{Acc})$ as shown in Figure~\ref{fig:aut_eg_4}.

\begin{figure}[t]
    \centering
    \begin{tikzpicture}[node distance =2cm]
    \node[initial, state, draw, initial text =,fill=blue!10!white] (0) at (0,0) {$q_0$};\
    \node[,state, fill=blue!10!white,] (1) at (2,0) {$q_1$};
    \node[ ,state, fill=blue!10!white,] (2) at (4,0) {$q_2$};
    \node[accepting, state, fill = blue!10!white,] (3) at (6,0) {$q'_2$};
    \path[->]
    (0) edge node[above]{$b$} (1)
    (1) edge[loop below] node{$b$} (1)
    (1) edge node[above]{$a$} (2)
    (2) edge node[above]{$a,b$} (3)
    (3) edge[loop right] node{$a , b$} (3);
    \end{tikzpicture}
    \caption{ Automaton $\Aa'_{b_1}$ created from unfolding NBA $\Aa_b$ in Section~\ref{sec:intro} (Figure~\ref{fig:aut_eg_1}). }
    \label{fig:aut_eg_4}
\end{figure}

Consider a system $\Sys = (\Xx, \Xx_0, f)$, under a labeling map $\Ll$ which naturally partitions $\Xx$ into two sets $\Xx_{a}$ and $\Xx_b$.
We break the simple paths of the above automaton into state triplets or edge pairs and then find barrier certificates over these edge pairs. In this case there is a single simple path from the initial state $q_0$ to an accepting state $q'_2$ consisting of the states $(q_0, q_1, q_2, q'_2)$ and leads to two triplets $(q_0, q_1, q_2)$ and $(q_1, q_2, q'_2)$.
We take the first triplet and note this corresponds to the edges with labels $a$ and $b$.
We then search for a barrier certificate by considering the initial set of states to be the set $\Xx_{b}$, the set of states with a label $b$, and the unsafe set of states $\Xx_u$ to be the set $\Xx_{a}$, all the states with a label $a$.
If we are successful in finding such a barrier certificate, then we infer that we may cut this path and observe that no other path from the initial state $q_0$ to the accepting state $q'_2$ exists.
In such a case we verify that $T(\Sys, \Ll) \cap L(\Aa_b) = \emptyset$ and so that no trace of the system is in the language of NBA $\Aa_b$.
If we cannot find such a barrier certificate, we now search for a barrier certificate in the next triplet, the triplet $(q_0, q_1, q_2)$.
Observe that the edges $(q_1, q_2)$ and $(q_2, q'_2)$ share the label $a$, and as such one cannot simultaneously satisfy condition~\eqref{eq:bar_cond_1} and~\eqref{eq:bar_cond_2} on the states with the label $a$.
When this fails, our approach is inconclusive and we can say nothing about whether the system satisfies the property or not.

The state triplet approach is conservative in the following direction: Independent of the state runs in the automaton and of the initial states of the system $\Sys$, 
one is required to break the edge pairs of every simple path regardless of what states of the automaton may be encountered before or after.

\begin{theorem}
\label{thm:subsumption}
    Let $\Sys = (\Xx, \Xx_0, f)$ be a system, $\Ll:\Xx \to \Sigma$ be a labeling map, and $\Aa_b = (\Sigma, Q, Q_0, \delta, Q_{Acc})$ be an NBA. Suppose there exist barrier certificates $\Bb_1, \ldots, \Bb_m$ that act as a proof that no trace of the system is in $L(\Aa_b)$ by the state triplet approach, then there exists a CBBC $\Bb$ that also serves as a witness that no trace is in $L(\Aa_b)$.
\end{theorem}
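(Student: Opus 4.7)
The plan is to construct the CBBC $\Bb$ by stitching together the state triplet barrier certificates $\Bb_1, \ldots, \Bb_m$ in a way that tracks both the current automaton state $i$ and the counter $\ell$ on the product system $\hSys$. Each $\Bb_j$ cuts a single triplet $(q_{a_j}, q_{b_j}, q_{c_j})$ of $\Aa_b$, ruling out those trace segments in which the automaton passes through these three states consecutively while the system evolves. Since by hypothesis the conjunction of these cuts excludes all runs of $\Aa_b$ that visit $Q_{Acc}$ infinitely often, there is some finite bound $k$ — at worst $k=2$, per the observation in Section~\ref{sec:intro} — such that on the product $\hSys$ the set $\hX_{Acc}$ is visited at most $k$ times along any trace.

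First I would formalize what each $\Bb_j$ supplies: roughly, $\Bb_j(x) \leq 0$ at every $x$ at which the automaton can plausibly be in the triplet's first state, $\Bb_j(x) > 0$ at every $x$ at which completion of the triplet would be enabled, and $\Bb_j$ is non-increasing along those transitions that correspond to its middle edge. I would then read off the bound $k$ by counting how many $Q_{Acc}$-visits the cut triplets can still permit before every accepting run of $\Aa_b$ is blocked.

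Next I would define the lifted CBBC piecewise in $(i,\ell)$ as
\[
  \Bb(x, i, \ell) \;=\; \max_{j \in J(i,\ell)} \bigl( c_{j,i,\ell} \, \Bb_j(x) \bigr) - d_{i,\ell},
\]
where $J(i,\ell)$ selects those triplets whose first component equals $q_i$ and which are still \emph{active} at counter level $\ell$, and where the positive scalars $c_{j,i,\ell}$ and offsets $d_{i,\ell}$ are tuned so that the sign conditions~\eqref{eq:prod_cond_1}--\eqref{eq:prod_cond_3} hold at $\ell=0$, $\ell=1$ and $\ell=k+1$ respectively. The monotonicity conditions~\eqref{eq:prod_cond_4}--\eqref{eq:prod_cond_5} then reduce, for each product transition $(q_i, \Ll(x), q_j) \in \delta$, to invoking the non-increase of the specific $\Bb_j$ that cuts the triplet active at that transition; the unit counter jump on visits to $Q_{Acc}$ is absorbed into the additive offsets $d_{i,\ell}$.

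The main obstacle I anticipate is the combinatorial bookkeeping: a given product state $(x,i)$ can lie at the intersection of several candidate triplets, and the piecewise assignment must agree on overlaps so that the non-increase condition holds across \emph{every} feasible successor simultaneously. I plan to handle this by induction on $\ell$, arguing that only the triplets whose ``first edge'' is compatible with $q_i$ contribute to the inductive step at level $\ell$, and that the offsets $d_{i,\ell}$ can be chosen to carry the safety invariant across one accepting-state visit at a time. Provided this closes, the resulting $\Bb$ meets Definition~\ref{def:cbar_prod}, and Theorem~\ref{thm:barrier_UCA} then concludes that no trace of $\Sys$ under $\Ll$ lies in $L(\Aa_b)$.
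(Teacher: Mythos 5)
There is a genuine gap, and it sits exactly where you flag it as "the main obstacle": the combinatorial bookkeeping is not a detail to be tuned away with scalars $c_{j,i,\ell}$ and offsets $d_{i,\ell}$ --- it is the actual content of the theorem, and your plan does not supply the fact that makes it close. The paper's argument does not rest on the pointwise non-increase of the individual $\Bb_j$ along system transitions at all. It rests on a reachability consequence of the cuts: after unrolling the simple cycles through accepting states (so that $L(\Aa'_b)=L(\Aa_b)$, no initial state is accepting, and every simple initial-to-accepting path is cut), and after merging triplets that share a middle state, the states of $\Aa'_b$ split into $Q'_l$ (at or left of some cut middle state) and $Q'_r$ (strictly right of the cuts), and \emph{no product state over $Q'_r$ is reachable} precisely because each triplet's barrier certificate forbids the corresponding label sequence. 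The CBBC is then taken piecewise \emph{constant in $x$} --- a constant $\leq 0$ on $Q'_l$ and a constant $>0$ on $Q'_r$, with $k=0$ --- so conditions~\eqref{eq:prod_cond_4}--\eqref{eq:prod_cond_5} hold trivially within each block and vacuously across blocks. Your construction never invokes this unreachability, skips the unrolling and the middle-state normalization, and your bound ``$k=2$'' is not what is needed: the lift works with $k=0$ on the unrolled product.

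Concretely, two steps of your plan fail without that reachability fact. First, condition~\eqref{eq:prod_cond_3} requires $\Bb(x,i,k+1)>0$ for \emph{every} $x\in\Xx$ at accepting $q_i$, but each $\Bb_j$ is only guaranteed positive on its unsafe region $\Xx_{b_j}$; forcing $\max_j c_{j,i,\ell}\Bb_j(x)-d_{i,\ell}$ to be positive on all of $\Xx$ by choosing $d_{i,k+1}$ large and negative then violates the non-increase conditions on transitions that enter counter level $k+1$, unless you separately prove those transitions cannot occur --- which is the missing argument. Second, conditions~\eqref{eq:prod_cond_4}--\eqref{eq:prod_cond_5} compare a max over triplets anchored at the successor automaton state $q_j$ against a max over triplets anchored at $q_i$; the hypothesis $\Bb_{j'}(f(x))\leq\Bb_{j'}(x)$ for a single certificate says nothing about $\Bb_{j''}(f(x))$ versus $\Bb_{j'}(x)$ for $j''\neq j'$, so the induction on $\ell$ you propose has no way to carry the invariant across a change of active triplet. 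The fix is to abandon the $x$-dependence of the lifted certificate entirely and argue unreachability of the post-cut region, as the paper does.
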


We present a constructive proof of Theorem \ref{thm:subsumption} in Appendix \ref{subsec:construct_subsump}.
A nonconstructive proof follows from the fact that there exist systems which cannot be verified via the state triplet approach but can be via CBBCs.

\section{Case Studies}
\label{sec:case_studies}
The first case study demonstrates that while the state triplet approach is inadequate in demonstrating the satisfaction of the property, our approach finds CBBCs to verify the satisfaction of the property.
We then use the SOS-based approach to find certificates for higher dimensional oscillator models to demonstrate their effectiveness.

\subsection{Case Study Against the State Triplet Approach}
\label{subsec:case_study_triplet}
Consider a room-temperature model adapted from~\cite{anand_2022_k}, whose dynamics is specified as:
\[ \Sys: x(t+1) = (1 - \tau_s \alpha)x(t) + \tau_s\alpha T_e, \]
To demonstrate the effectiveness of our approach against properties specified by automata, we consider the same system $\Sys = (\Xx, \Xx_0, f)$ as in the previous Subsection,  with $\Xx = [17, 40] $ as the state set, $\Xx_0 = [30,35]$ as the initial set of states, and $f(x) = (1 - \tau_s \alpha)x + \tau_s\alpha T_e$
denotes the state transition function, where $\tau_s = 5$ minutes denotes the sampling time, $\alpha = 0.08$ the heat-exchange coefficient and $T_e = 17$C denotes the ambient temperature.
Consider the UCA $\Aa_c = (\Sigma, Q, Q_0, \delta, Q_{Acc})$ specified in Figure~\ref{fig:aut_1_case_study}, where $Q = \{ q_0, q_1\}$ denotes the set of states, $Q_0 = \{ q_0 \}$ the initial state, $Q_{Acc} = \{ q_1 \}$ the accept state, and the transition function $\delta$  specified by the edges in the graph.
\begin{figure}[b]
    \centering
    \begin{tikzpicture}[node distance =2cm]
    \node[initial, state, draw, initial text =,fill=blue!10!white] (0) at (0,0) {$q_0$};
    \node[accepting ,state, fill=blue!10!white,] (1) at (2,0) {$q_1$};
    \path[->]
    (0) edge[bend left] node[above]{$b$} (1)
    (1) edge[bend left] node[below]{$a,b,c$} (0)
    (0) edge[loop above] node{$a , c$} (0);
    \end{tikzpicture}
    \caption{ UCA $\Aa_c$ for Subsection~\ref{subsec:case_study_triplet}.}
    \label{fig:aut_1_case_study}
\end{figure}
Let the labeling function $\Ll: \Xx \to \Sigma$ be defined as:
\begin{align*}
        & \Ll(x) = a,& & \text{ for all } x \in (28,40] , \\
    & \Ll(x) = b, && \text{ for all } x \in [25,28] , \text{ and } \\
    & \Ll(x) = c && \text{ for all x } \in [17,25).     
\end{align*}

First, we observe that we cannot make use of the state triplet approach.
The initial set of states have a label $a$ and can reach states with a label $b$, thus one cannot cut the edge pair $(q_0, q_0)$ and $(q_0, q_1)$.
Similarly, one cannot cut the edge pair from $q_1$ to $q_0$ as the edge pairs $(q_0, q_1)$ and $(q_1,q_0)$ as well as the pair $(q_1, q_0)$, and $(q_0, q_0)$ have overlapping labels.
We thus make use of CBBCs to guarantee that the accepting state $q_1$ is visited finitely often.

We fix the value of $k$ to be $4$, and consider a degree-3 polynomial $\Bb$ in $x$ (i.e. the state of the system), $i$ (i.e. the state of the automaton), and $\ell$ (i.e. the counter value).
We then adopt a CEGIS approach to find such a CBBC, and found the function $\Bb(x, i, \ell) = -0.000311 \ell^3 -0.01770 x i \ell + 0.0727 \ell^2 +0.0442 x \ell +0.30096 i \ell - \ell - 0.001 $.
The time taken to find this CBBC was $1.5$ hours on the reference machine.

\subsection{Verification for a 2D Van der Pol oscillator model}
\label{subsec:case_study_osc}
\VM{
Next, we consider a 2D van-der Pol Oscillator, whose model dynamics is given by the following difference equations:
\begin{align}
&\Sys\begin{cases}
    x(t+1) = x(t) + T y(t),\\
    y(t+1) = y(t) + T(-x(t) + u y(t)(1-x(t)^2)), \nonumber
\end{cases}
\end{align}
where $T = 0.1$ is the sampling time and $u = 0.4$.
The state set, and initial set, are given by $\Xx = [-2.5,4]\times [-2.5,2.5]$, $\Xx_0 = [3,3.5]\times[1.5,2]$respectively. 
We consider the UCA $\Aa_{c}$ in Figure \ref{fig:aut_2_case_study}, with \[\Ll(x,y) = \begin{cases}
a & \text{ if }  x \in [3,4], y \in [-1,0.5], \\
c & \text{ if }  x \in [2,4], y \in [-2.5,-2], \\
b & \text{ otherwise.}
\end{cases} \]
\begin{figure}[b]
    \centering
    \begin{tikzpicture}[node distance =2cm]
    \node[initial, state, draw, initial text =,fill=blue!10!white] (0) at (0,0) {$q_0$};
    \node[accepting ,state, fill=blue!10!white,] (1) at (2,0) {$q_1$};
    \node[state, accepting, fill = blue!10!white] (2) at (1,-2) {$q_2$};
    \path[->]
    (0) edge[bend left] node[above]{$a$} (1)
    (1) edge[bend left] node[below]{$b$} (0)
    (1) edge[bend left] node[below]{$c$} (2)
    (1) edge[bend left] node[below]{$c$} (2)
    (0) edge[bend right] node[left]{$c$} (2)
    (0) edge[loop above] node{$b$} (0)
    (1) edge[loop above] node{$a$} (1)
    (2) edge[loop below] node{$a,b,c$} (2);
    \end{tikzpicture}
    \caption{ UCA $\Aa_c$ for Subsection~\ref{subsec:case_study_osc}.}
    \label{fig:aut_2_case_study}
\end{figure}
We then search for a CBBC as in Definition \ref{def:cbar_prod} following Algorithm \ref{alg:CBBC} by using the TSSOS \cite{wang2021tssos} package by setting $k_{\max} = 8$, and incrementing the value of $d$ from $d=1$ to $d = 8$.
We are able to find a CBBC of degree $6$ with $k = 3$, ensuring that the accepting states $q_1$, and $q_2$ are visited at most $3$ times from an initial state of the system.
The time taken for this is around $5$ minutes on the reference machine and the certificates are described in Appendix \ref{ap:coeffs_osc}.
}
\subsection{Verification of a 3D Kuramoto Oscillator Model}
\label{subec:case_study_Kuramoto}
\VM{
As a final case study, we consider a three dimensional Kuramoto oscillator where the dynamics are given by:
\begin{align}
\Sys
&\begin{cases}
    x(t+1) &= x(t) + TKsin(y(t)-x(t))  \\ & \qquad + cx(t)^2 + \Omega,\\
    y(t+1) &= y(t) + TK\big(sin(x(t)-y(t)) \\&\qquad+ sin(z(t)-y(t)) \big) +  \Omega +cy(t)^2,\\
    z(t+1) &= z(t) + TKsin(y(t)-z(t)) \\ & \qquad +cz(t)^2 + \Omega, \nonumber
\end{cases}
\end{align}
where $T = 0.1s$ is the sampling time, $\Omega = 1$ is the natural frequency added to an offset, $K = 0.06$ is the coupling strength, and $c = -0.532$. 
As the above dynamics uses a trigonometric function, we approximate it using a Taylor's approximation as $sin(r) \approx r- \frac{r^3}{6}$ for all the corresponding sine terms.
We consider $\Xx = [0,2]^3$ to be the set of states, $\Xx_{0} = [0, \frac{2\pi}{15}]^3$ to be the initial set of states, and $\Xx_{VF} = [0,0.7]^2 \times [0,2]$ to be the set of states that must be visited only finitely often.
We then search for a CBBC as in Definition \ref{def:cbar_prod} following Algorithm \ref{alg:CBBC} by using the TSSOS \cite{wang2021tssos} package by setting $k_{\max} = 8$, and incrementing the value of $d$ from $d=1$ to $d = 8$.
We find a CBBC as in Definition \ref{def:cbar}, of degree $d = 4$ with $k = 2$.
The time taken for this is around $20$ minutes and the certificates are given in Appendix \ref{ap:coeffs_kuram}.
}

\section{Conclusion}
\label{sec:concl}
This paper advocated the notion of co-B\"uchi barrier certificates that act as a proof that a region is visited only finitely often.
These CBBCs have been demonstrated to be useful in the automata-theoretic verification of discrete-time dynamical systems.
This paper shows that techniques to compute standard barrier certificates can be extended to search for CBBCs.
Some promising next steps are to understand the applicability of CBBCs in verifying quantitative and time-critical properties of dynamical systems via weighted, stochastic, and timed automata based specifications.

\bibliographystyle{elsarticle-num}
\bibliography{bibliography}

\appendix
\section{A Constructive Proof for subsumption of the state-triplet approach}
\subsection{Our Construction}
\label{subsec:construct_subsump}
We now show that our approach generalizes the earlier state triplet approach.
Consider the NBA $\Aa'_b = (\Sigma, Q', Q_0, \delta', Q_{Acc})$, and let us assume that there exist $m$ barrier certificates $\Bb_1, \Bb_2 \ldots, \Bb_m$ associated with state triplets $(q_i, q'_i, q''_i)$, (or edge pairs $(e_i, e'_i)$) for each $1 \leq i \leq m$, such that $q'_i \in \delta'(q_i, a_i)$ and $q''_i \in \delta'(q'_i, b_i)$ that act as a proof certificate that no trace of the system is in $L(\Aa'_b)$.

First, for convenience we assume that there are no edge pairs $(e_i, e'_i)$ and $(e_j, e'_j)$ for all $1 \leq i, j \leq m$, $i \neq j$,  such that the edges $e_i$ and $e_j$ are incoming edges to the same state and the edges $e'_i$ and $e'_j$ are outgoing edges from the same state, \textit{i.e.}, there are no two distinct state triplets $(q_i, q'_i, q''_i)$ and $(q_j, q'_j, q''_j)$ which share a common middle element ($q'_i \neq q'_j$ for all $0 \leq i,j \leq m$, $i \neq j$).
Situations which may arise are based on the example shown in Figure~\ref{fig:edge_cases}.
To consider a single triplet, we replace their barrier certificates, with a new one by taking either the minimum or maximum  of the barrier certificates $\Bb_i$ and $\Bb_j$ or neither. 
We list three different cases as follows:
\begin{itemize}
    \item Suppose we have the edges $e_i$ and $e_j$ to be the same edge and the edges $e'_i$ and $e'_j$ to be different. 
    In this case we have $a_i = a_j$.
     We know that for any $x\in \Xx$ with $\Ll(x) = a_i$, we have both $\Bb_i(x) \leq 0$ and $\Bb_j(x) \leq 0$, and at least one of $\Bb_i(x) > 0$ or $\Bb_j(x) > 0$ for all states $x \in \Xx$ such that either $\Ll(x) = b_i$ or $\Ll(x) = b_j$. We therefore consider $\Bb'(x) = \max\{\Bb_i(x), \Bb_j(x)\}$ as the new barrier certificate for both the edge pairs.
    \item Suppose we have the edges $e'_i$ and $e'_j$ to be the same edge and the edges $e_i$ and $e_j$ to be different.
    In this case we have $b_i = b_j$.
    We know that for any $x\in \Xx$ with $\Ll(x) = a_i$, we have at least one of $\Bb_i(x) \leq 0$ or $\Bb_j(x) \leq 0$, and both $\Bb_i(x) > 0$ and $\Bb_j(x) > 0$ for all states $x \in \Xx$  with $\Ll(x) = b_i$. We therefore consider $\Bb'(x) = \min\{\Bb_i(x), \Bb_j(x)\}$ as the new barrier certificate for both the edge pairs.
    \item Neither edges are the same but they are incident on the same vertex, \textit{i.e.}, the states $q'_i$ and $q'_j$ are the same for both of the triplets.
    In this case we do not consider the barrier certificate as both paths continue on and must be cut at some future point, namely there is no barrier certificate between $e_i$ and $e'_j$ nor is there one between $e_j$ and $e'_i$ and so there must be some future state triplet or edge pairs which disallow paths starting from $e'_i$ and $e'_j$.
\end{itemize}
\begin{figure}[b]
    \centering
    \begin{tikzpicture}[node distance =2cm]
    \node[ state, draw, initial text =,fill=blue!10!white] (0) at (1,1) {$q'$};
    \node[] (1) at (0,0) {};
    \node[] (2) at (2,2) {};
    \node[] (3) at (0,2) {};
    \node[] (4) at (2,0) {};
    \path[->]
    (1) edge node[above]{$e_i$} (0)
    (0) edge node[above] {$e'_i$} (2)
    (3) edge node[above]{$e_j$} (0)
    (0) edge node[above] {$e'_j$} (4);
    \end{tikzpicture}
    \caption{Corner cases where multiple barrier certificates are present for the same intermediate state. In all cases we can reduce this to a single barrier certificate over the state $q'$. Here we can either replace the barrier certificates with the maximum or minimum or ignore these. }
    \label{fig:edge_cases}
\end{figure}

Now, we divide the states of NBA $\Aa'_b$ into two sets $Q'_l$ or $Q'_r$ based on whether the state is to the left or the right of \emph{some} triplet.
If a state $q \in Q'$ is to the left of, or is, the middle element of some state triplet along a path from an initial state to an accepting state, then it is in  $Q'_l$, else it is in $Q'_r$.
Observe that all accepting states must be in $Q'_r$ as there is some triplet for every path by our assumption.

We now define a CBBC $\Bb$ as in Definition~\ref{def:cbar_prod}, with the counter value $k = 0$, over the synchronized product as follows.
First, for all states $x \in \Xx$ and every state $q_l \in Q'_l$, we define $\Bb( x, l, 0 )$ as the maximum value of the barriers $\Bb_1, \ldots, \Bb_m$ over the regions. For all states $x \in \Xx$ and $q_r \in Q'_r$, we define $\Bb( x, r, 0)$ and $\Bb( x, r, 1 )$ as the minimum value of the barrier certificates $\Bb_1, \ldots, \Bb_m$ over the regions $b_1, \ldots, b_m$. In summary, we get:

 \[\Bb( x,\xi,\ell ) =  
 \begin{cases}
\underset{1 \leq i \leq m}{\max} \big\{ {\underset{\underset{\Ll(x)= a_i}{x \in \Xx} }{\max} \{ \Bb_i(x) \} } \big\} \quad \text{for all } q_{\xi} \in Q'_l, 
\ell = 0\\

\underset{1 \leq i \leq m}{\min}\big \{ \underset{\underset{\Ll(x)= b_i}{x \in \Xx} }{\min} \{ \Bb_i(x) \} \big\}, \text{ for all } q_{\xi} \in Q'_r, \ell \in\{ 0,1\}.
 \end{cases}
 \]

Observe that the largest value this achieves for all states $q'_l \in Q'_l$ is still less than $0$ and so we satisfy condition~\eqref{eq:prod_cond_1}. 
Next,
the value of the CBBC for all states $q'_r \in Q'_r$ is greater than $0$ and so we satisfy condition~\eqref{eq:bar_cond_3} for all states $q'\in Q_{Acc}$.

Conditions~\eqref{eq:prod_cond_4} and~\eqref{eq:prod_cond_5} hold within the sets $Q'_l$ and $Q'_r$ as they are constants in the two partitions.
Secondly, we observe that no state in $Q'_r$ can be reached by any trace of the system as all the state triplets have been cut, and so there exists no state $x \in \Xx$ and $q_i \in Q'_l$ such that there exists a state $q'_j \in Q'_r$ with $q'_j \in \delta(q_i, \Ll(x))$.
This guarantees conditions~\eqref{eq:prod_cond_4} and~\eqref{eq:prod_cond_5} hold for every state $x \in \Xx$ and $q \in Q$.
We  illustrate this partition of the states of the NBA $\Aa'_{b}$  in Figure~\ref{fig:triplet_div}, where the states $q'_i$ denote the middle state of the triplets.
\begin{figure}[b]
    \centering
    \begin{tikzpicture}[node distance =2cm]
    \node[ initial, state, draw, initial text =,fill=blue!10!white] (0) at (0,0) {$q_{0,0}$};
    \node[] (1) at (0,-2) {$\ldots$};
    \node[] (2) at (0,-4) {$\ldots$};
    \node[initial, state, draw, initial text =,fill=blue!10!white] (3) at (0,-6) {$q_{0, \ell}$};
    \node[] (4) at (2,0) {$\ldots$};
    \node[] (5) at (2,-2) {$\ldots$};
    \node[] (6) at (2,-4) {$\ldots$};
    \node[] (7) at (2,-6) {$\ldots$};

    \node[, state, draw, initial text =,fill=blue!10!white] (8) at (4,0) {$q'_{1}$};
    \node[] (9) at (4,-2) {$\ldots$};
    \node[, state, draw, initial text =,fill=blue!10!white] (10) at (4,-4) {$q'_{m}$};
    \node[state, draw, accepting, fill=blue!10!white] (11) at (7,-2) {$q_{Acc}$};
    \node[] (12) at (5.5,0) {$\ldots$};
    \node[] (13) at (5.5,-2) {$\ldots$};
    \node[] (14) at (5.5,-4) {$\ldots$};
    \node[] (15) at (5.5,-6) {$\ldots$};
    \path[->]
    (0) edge node[above]{$\sigma_1$} (4)
    (3) edge node[below]{$\sigma_2$} (7)
    (4) edge node[above]{$a_1$} (8)
    (7) edge node[below, right]{$a_m$} (10)
    (8) edge node[above]{$b_1$} (13)
    (10) edge node[below]{$b_m$} (14)
    (14) edge node[below]{$\sigma_4$} (11)
    (13) edge node[above]{$\sigma_3$} (11);
    \end{tikzpicture}
    \caption{ The NBA $\Aa'_b$ is  divided along the middle elements of the state triplets. The states to the left of (and including) the states $q'_i$, for all $1 \leq i \leq m$,  are in $Q'_l$ and the right are in $Q'_r$.}
    \label{fig:triplet_div}
\end{figure}

Observe that the CBBC can be defined as a piecewise function which is fixed to a constant value less than $0$ for all states $q_l \in Q'_l$ and $x \in \Xx$, and to a constant value greater than $0$ for all states $q_r \in Q'_r$ and states $x \in \Xx$. 

\section{CBBC for Section \ref{subsec:case_study_osc}}
\label{ap:coeffs_osc}
The CBBC $\Bb_{\ell,i}(x)$ are defined picewise over $\ell \in Q$ ($\ell \in [0..2]$), and $i \in [0..3]$ as:
\begin{description}
\item $\Bb_{0,0}(x,y) = -0.001479 \cdot x^{6} - 0.012757 \cdot x^{5} \cdot y + 0.05985 \cdot x^{4} \cdot y^{2} - 0.075517 \cdot x^{3} \cdot y^{3} + 0.071967 \cdot x^{2} \cdot y^{4} - 0.042324 \cdot x \cdot y^{5} + 0.010623 \cdot y^{6} - 0.003189 \cdot x^{5} - 0.011026 \cdot x^{4} \cdot y - 0.004925 \cdot x^{3} \cdot y^{2} - 0.015355 \cdot x^{2} \cdot y^{3} + 0.00277 \cdot x \cdot y^{4} - 0.005104 \cdot y^{5} + 0.117479 \cdot x^{4} - 0.022629 \cdot x^{3} \cdot y - 0.151256 \cdot x^{2} \cdot y^{2} + 0.299684 \cdot x \cdot y^{3} - 0.037504 \cdot y^{4} + 0.017933 \cdot x^{3} + 0.061936 \cdot x^{2} \cdot y + 0.005285 \cdot x \cdot y^{2} + 0.030286 \cdot y^{3} - 1.727458 \cdot x^{2} + 0.393963 \cdot x \cdot y - 1.348065 \cdot y^{2} + 1.0e-6 \cdot x - 1.0e-6 \cdot y + 6.101348$ \\
\item $\Bb_{0,1}(x,y) := 0.014904 \cdot x^{6} - 0.001828 \cdot x^{5} \cdot y + 0.103405 \cdot x^{4} \cdot y^{2} - 0.079955 \cdot x^{3} \cdot y^{3} + 0.109044 \cdot x^{2} \cdot y^{4} - 0.047584 \cdot x \cdot y^{5} + 0.017908 \cdot y^{6} + 0.001017 \cdot x^{5} - 0.00713 \cdot x^{4} \cdot y + 0.005648 \cdot x^{3} \cdot y^{2} - 0.018626 \cdot x^{2} \cdot y^{3} + 0.010244 \cdot x \cdot y^{4} - 0.006376 \cdot y^{5} - 0.014842 \cdot x^{4} - 0.130578 \cdot x^{3} \cdot y - 0.369789 \cdot x^{2} \cdot y^{2} + 0.293624 \cdot x \cdot y^{3} - 0.110469 \cdot y^{4} - 0.004996 \cdot x^{3} + 0.044031 \cdot x^{2} \cdot y - 0.025907 \cdot x \cdot y^{2} + 0.035927 \cdot y^{3} - 1.60888 \cdot x^{2} + 0.682265 \cdot x \cdot y - 1.300488 \cdot y^{2} + 1.0e-6 \cdot x + 1.926749$ \\

\item $\Bb_{0,2}(x,y) := 0.020866 \cdot x^{6} + 0.0065 \cdot x^{5} \cdot y + 0.114056 \cdot x^{4} \cdot y^{2} - 0.072159 \cdot x^{3} \cdot y^{3} + 0.114243 \cdot x^{2} \cdot y^{4} - 0.044565 \cdot x \cdot y^{5} + 0.018515 \cdot y^{6} + 0.004629 \cdot x^{5} - 0.004803 \cdot x^{4} \cdot y + 0.013346 \cdot x^{3} \cdot y^{2} - 0.01922 \cdot x^{2} \cdot y^{3} + 0.013768 \cdot x \cdot y^{4} - 0.006355 \cdot y^{5} - 0.045795 \cdot x^{4} - 0.208793 \cdot x^{3} \cdot y - 0.383569 \cdot x^{2} \cdot y^{2} + 0.245491 \cdot x \cdot y^{3} - 0.106145 \cdot y^{4} - 0.024058 \cdot x^{3} + 0.031517 \cdot x^{2} \cdot y - 0.045583 \cdot x \cdot y^{2} + 0.036307 \cdot y^{3} - 1.673498 \cdot x^{2} + 0.866638 \cdot x \cdot y - 1.407303 \cdot y^{2} + 1.0e-6 \cdot x + 1.364599$ \\

\item $\Bb_{0,3}(x) := 0.014596 \cdot x^{6} + 0.014443 \cdot x^{5} \cdot y + 0.091242 \cdot x^{4} \cdot y^{2} - 0.05221 \cdot x^{3} \cdot y^{3} + 0.092222 \cdot x^{2} \cdot y^{4} - 0.034871 \cdot x \cdot y^{5} + 0.0139 \cdot y^{6} + 0.003293 \cdot x^{5} - 0.004216 \cdot x^{4} \cdot y + 0.009595 \cdot x^{3} \cdot y^{2} - 0.015104 \cdot x^{2} \cdot y^{3} + 0.010328 \cdot x \cdot y^{4} - 0.004994 \cdot y^{5} + 0.030812 \cdot x^{4} - 0.280669 \cdot x^{3} \cdot y - 0.216901 \cdot x^{2} \cdot y^{2} + 0.16154 \cdot x \cdot y^{3} - 0.044066 \cdot y^{4} - 0.017154 \cdot x^{3} + 0.026044 \cdot x^{2} \cdot y - 0.03371 \cdot x \cdot y^{2} + 0.029005 \cdot y^{3} - 1.956452 \cdot x^{2} + 1.019239 \cdot x \cdot y - 1.657867 \cdot y^{2} + 2.587095$ \\

\item $\Bb_{1,0}(x,y) := 1.169368 \cdot x^{6} - 0.802076 \cdot x^{5} \cdot y + 1.61694 \cdot x^{4} \cdot y^{2} - 0.806867 \cdot x^{3} \cdot y^{3} + 3.620299 \cdot x^{2} \cdot y^{4} - 1.427064 \cdot x \cdot y^{5} + 1.50265 \cdot y^{6} - 7.151424 \cdot x^{5} + 3.095272 \cdot x^{4} \cdot y - 8.119412 \cdot x^{3} \cdot y^{2} + 1.904013 \cdot x^{2} \cdot y^{3} - 15.292198 \cdot x \cdot y^{4} + 4.045883 \cdot y^{5} + 9.863108 \cdot x^{4} - 0.909527 \cdot x^{3} \cdot y + 9.274002 \cdot x^{2} \cdot y^{2} - 0.648806 \cdot x \cdot y^{3} + 8.101657 \cdot y^{4} + 3.193949 \cdot x^{3} - 2.674896 \cdot x^{2} \cdot y + 5.269647 \cdot x \cdot y^{2} - 2.075598 \cdot y^{3} - 0.355375 \cdot x^{2} + 0.248207 \cdot x \cdot y + 0.604798 \cdot y^{2} - 0.109638 \cdot x - 0.235628 \cdot y + 4.284889$ \\

\item $\Bb_{1,1}(x,y) := 1.079006 \cdot x^{6} - 0.514729 \cdot x^{5} \cdot y + 3.468024 \cdot x^{4} \cdot y^{2} + 3.107501 \cdot x^{3} \cdot y^{3} + 11.825979 \cdot x^{2} \cdot y^{4} + 9.156119 \cdot x \cdot y^{5} + 4.958857 \cdot y^{6} - 6.65663 \cdot x^{5} + 2.319631 \cdot x^{4} \cdot y - 11.363004 \cdot x^{3} \cdot y^{2} + 2.422806 \cdot x^{2} \cdot y^{3} - 5.018148 \cdot x \cdot y^{4} + 6.555657 \cdot y^{5} + 9.146765 \cdot x^{4} - 2.107326 \cdot x^{3} \cdot y + 4.292546 \cdot x^{2} \cdot y^{2} - 5.384374 \cdot x \cdot y^{3} + 4.530556 \cdot y^{4} + 3.30525 \cdot x^{3} - 1.60986 \cdot x^{2} \cdot y + 2.758191 \cdot x \cdot y^{2} - 2.695313 \cdot y^{3} - 0.477553 \cdot x^{2} + 0.939179 \cdot x \cdot y + 0.13928 \cdot y^{2} + 0.069863 \cdot x + 0.026055 \cdot y + 3.61205$\\

\item $\Bb_{1,2}(x,y) := 0.8925 \cdot x^{6} - 0.742628 \cdot x^{5} \cdot y + 3.15367 \cdot x^{4} \cdot y^{2} + 3.382412 \cdot x^{3} \cdot y^{3} + 12.385971 \cdot x^{2} \cdot y^{4} + 9.857827 \cdot x \cdot y^{5} + 5.144468 \cdot y^{6} - 5.872276 \cdot x^{5} + 2.954834 \cdot x^{4} \cdot y - 10.317188 \cdot x^{3} \cdot y^{2} + 2.308949 \cdot x^{2} \cdot y^{3} - 4.777581 \cdot x \cdot y^{4} + 6.499851 \cdot y^{5} + 8.932045 \cdot x^{4} - 1.896671 \cdot x^{3} \cdot y + 4.762227 \cdot x^{2} \cdot y^{2} - 5.233601 \cdot x \cdot y^{3} + 4.614206 \cdot y^{4} + 2.101369 \cdot x^{3} - 1.36296 \cdot x^{2} \cdot y + 2.356684 \cdot x \cdot y^{2} - 2.4981 \cdot y^{3} - 0.911951 \cdot x^{2} + 1.140637 \cdot x \cdot y - 0.025658 \cdot y^{2} + 0.023114 \cdot x + 0.047989 \cdot y + 4.729123$ \\

\item $\Bb_{1,3}(x,y) := 0.657481 \cdot x^{6} + 0.481838 \cdot x^{5} \cdot y + 2.914344 \cdot x^{4} \cdot y^{2} + 2.589529 \cdot x^{3} \cdot y^{3} + 5.201449 \cdot x^{2} \cdot y^{4} + 2.90154 \cdot x \cdot y^{5} + 2.630666 \cdot y^{6} - 4.665976 \cdot x^{5} - 1.729505 \cdot x^{4} \cdot y - 12.526088 \cdot x^{3} \cdot y^{2} - 5.730965 \cdot x^{2} \cdot y^{3} - 7.31455 \cdot x \cdot y^{4} + 7.158644 \cdot y^{5} + 10.201324 \cdot x^{4} - 1.176185 \cdot x^{3} \cdot y + 12.251171 \cdot x^{2} \cdot y^{2} + 3.196583 \cdot x \cdot y^{3} + 13.482927 \cdot y^{4} - 7.195209 \cdot x^{3} + 4.947589 \cdot x^{2} \cdot y - 2.009166 \cdot x \cdot y^{2} + 1.422074 \cdot y^{3} + 2.104938 \cdot x^{2} + 2.740789 \cdot x \cdot y + 2.251575 \cdot y^{2} - 0.352426 \cdot x + 0.93496 \cdot y + 6.437098$\\

\item $\Bb_{2,0}(x,y) := 6.647834 \cdot x^{6} - 23.026921 \cdot x^{5} \cdot y + 63.958752 \cdot x^{4} \cdot y^{2} - 47.22306 \cdot x^{3} \cdot y^{3} + 21.22077 \cdot x^{2} \cdot y^{4} - 16.818629 \cdot x \cdot y^{5} + 20.858006 \cdot y^{6} - 17.291677 \cdot x^{5} - 21.229591 \cdot x^{4} \cdot y - 31.082763 \cdot x^{3} \cdot y^{2} - 18.195494 \cdot x^{2} \cdot y^{3} - 25.085691 \cdot x \cdot y^{4} - 5.694037 \cdot y^{5} + 29.391701 \cdot x^{4} + 28.493682 \cdot x^{3} \cdot y + 35.15935 \cdot x^{2} \cdot y^{2} - 2.539602 \cdot x \cdot y^{3} + 64.361465 \cdot y^{4} + 5.216234 \cdot x^{3} + 6.999094 \cdot x^{2} \cdot y + 3.81613 \cdot x \cdot y^{2} + 6.50213 \cdot y^{3} + 13.044233 \cdot x^{2} + 2.41708 \cdot x \cdot y + 16.618755 \cdot y^{2} + 0.582513 \cdot x + 2.187375 \cdot y + 15.750095$\\

\item $\Bb_{2,1}(x,y) := 2.105094 \cdot x^{6} - 1.453849 \cdot x^{5} \cdot y + 35.178515 \cdot x^{4} \cdot y^{2} - 49.88313 \cdot x^{3} \cdot y^{3} + 45.685832 \cdot x^{2} \cdot y^{4} - 4.629385 \cdot x \cdot y^{5} + 9.011891 \cdot y^{6} - 13.536015 \cdot x^{5} - 24.339273 \cdot x^{4} \cdot y - 31.687374 \cdot x^{3} \cdot y^{2} - 11.876397 \cdot x^{2} \cdot y^{3} - 16.015567 \cdot x \cdot y^{4} - 2.549349 \cdot y^{5} + 25.993643 \cdot x^{4} + 18.066015 \cdot x^{3} \cdot y + 24.268005 \cdot x^{2} \cdot y^{2} + 11.100776 \cdot x \cdot y^{3} + 53.216611 \cdot y^{4} + 3.59729 \cdot x^{3} + 5.936538 \cdot x^{2} \cdot y + 3.004211 \cdot x \cdot y^{2} + 5.123685 \cdot y^{3} + 11.34886 \cdot x^{2} + 2.645745 \cdot x \cdot y + 13.059074 \cdot y^{2} + 0.797674 \cdot x + 1.733336 \cdot y + 13.661955$

\item $\Bb_{2,2}(x,y) := 1.722178 \cdot x^{6} + 5.310932 \cdot x^{5} \cdot y + 18.349052 \cdot x^{4} \cdot y^{2} - 31.02038 \cdot x^{3} \cdot y^{3} + 53.695332 \cdot x^{2} \cdot y^{4} - 8.244383 \cdot x \cdot y^{5} + 2.053936 \cdot y^{6} - 12.174742 \cdot x^{5} - 25.506336 \cdot x^{4} \cdot y - 24.443983 \cdot x^{3} \cdot y^{2} - 6.502687 \cdot x^{2} \cdot y^{3} - 11.222995 \cdot x \cdot y^{4} - 0.641161 \cdot y^{5} + 19.955712 \cdot x^{4} + 10.419893 \cdot x^{3} \cdot y + 20.713391 \cdot x^{2} \cdot y^{2} + 21.301025 \cdot x \cdot y^{3} + 42.607373 \cdot y^{4} + 2.265752 \cdot x^{3} + 5.151999 \cdot x^{2} \cdot y + 2.673623 \cdot x \cdot y^{2} + 4.233808 \cdot y^{3} + 9.995201 \cdot x^{2} + 2.756518 \cdot x \cdot y + 9.961636 \cdot y^{2} + 1.05112 \cdot x + 1.362802 \cdot y + 11.696231$\\

\item $\Bb_{2,3}(x,y) := 1.830446 \cdot x^{6} + 7.207784 \cdot x^{5} \cdot y + 8.603092 \cdot x^{4} \cdot y^{2} - 10.95115 \cdot x^{3} \cdot y^{3} + 44.535124 \cdot x^{2} \cdot y^{4} - 15.92115 \cdot x \cdot y^{5} - 3.079411 \cdot y^{6} - 10.209742 \cdot x^{5} - 23.847993 \cdot x^{4} \cdot y - 14.326739 \cdot x^{3} \cdot y^{2} - 3.364347 \cdot x^{2} \cdot y^{3} - 7.604397 \cdot x \cdot y^{4} + 0.614566 \cdot y^{5} + 12.955935 \cdot x^{4} + 8.069967 \cdot x^{3} \cdot y + 22.039468 \cdot x^{2} \cdot y^{2} + 27.849717 \cdot x \cdot y^{3} + 33.681351 \cdot y^{4} + 0.812285 \cdot x^{3} + 5.228008 \cdot x^{2} \cdot y + 2.414385 \cdot x \cdot y^{2} + 3.676612 \cdot y^{3} + 8.583767 \cdot x^{2} + 2.504491 \cdot x \cdot y + 7.086297 \cdot y^{2} + 1.186848 \cdot x + 0.995341 \cdot y + 9.732926$
\end{description}

\section{CBBC for Section 
\ref{subec:case_study_Kuramoto}}
\label{ap:coeffs_kuram}
The CBBC $\Bb_{\ell,i}(x)$ are defined picewise over and $i \in [0..2]$ as:

\begin{description}
    \item[] $\Bb_{0}(x,y,z) =-0.110879 \cdot x^{4} - 0.092034 \cdot x^{3} \cdot y - 0.082008 \cdot x^{3} \cdot z + 0.109202 \cdot x^{2} \cdot y^{2} - 1.7e-5 \cdot x^{2} \cdot y \cdot z + 0.061129 \cdot x^{2} \cdot z^{2} - 0.091925 \cdot x \cdot y^{3} - 4.0e-6 \cdot x \cdot y^{2} \cdot z + 3.4e-5 \cdot x \cdot y \cdot z^{2} - 0.082014 \cdot x \cdot z^{3} - 0.127569 \cdot y^{4} - 0.025422 \cdot y^{3} \cdot z - 0.022908 \cdot y^{2} \cdot z^{2} - 0.025878 \cdot y \cdot z^{3} - 0.097348 \cdot z^{4} + 1.774397 \cdot x^{3} - 0.000857 \cdot x^{2} \cdot y + 0.000538 \cdot x^{2} \cdot z - 0.000318 \cdot x \cdot y^{2} - 0.001115 \cdot x \cdot y \cdot z + 0.000179 \cdot x \cdot z^{2} + 1.827821 \cdot y^{3} - 0.000432 \cdot y^{2} \cdot z - 0.000194 \cdot y \cdot z^{2} + 1.632205 \cdot z^{3} + 0.089029 \cdot x^{2} + 0.706954 \cdot x \cdot y + 0.616631 \cdot x \cdot z + 0.301684 \cdot y^{2} + 0.207754 \cdot y \cdot z + 0.287124 \cdot z^{2} - 10.00198 \cdot x - 10.300025 \cdot y - 9.201995 \cdot z - 0.322743$ \\
    \item $\Bb_{1}(x,y,z) = -0.286681 \cdot x^{4} - 0.208624 \cdot x^{3} \cdot y - 0.205306 \cdot x^{3} \cdot z + 0.282097 \cdot x^{2} \cdot y^{2} - 0.000111 \cdot x^{2} \cdot y \cdot z + 0.174779 \cdot x^{2} \cdot z^{2} - 0.208234 \cdot x \cdot y^{3} + 5.8e-5 \cdot x \cdot y^{2} \cdot z + 3.2e-5 \cdot x \cdot y \cdot z^{2} - 0.205275 \cdot x \cdot z^{3} - 0.343453 \cdot y^{4} - 0.02198 \cdot y^{3} \cdot z - 0.107036 \cdot y^{2} \cdot z^{2} - 0.025665 \cdot y \cdot z^{3} - 0.193375 \cdot z^{4} + 4.496906 \cdot x^{3} - 0.002283 \cdot x^{2} \cdot y + 0.001331 \cdot x^{2} \cdot z - 0.001123 \cdot x \cdot y^{2} - 0.002579 \cdot x \cdot y \cdot z + 0.00018 \cdot x \cdot z^{2} + 4.692951 \cdot y^{3} - 0.001385 \cdot y^{2} \cdot z + 2.6e-5 \cdot y \cdot z^{2} + 3.083638 \cdot z^{3} + 0.199697 \cdot x^{2} + 1.606444 \cdot x \cdot y + 1.543309 \cdot x \cdot z + 0.921929 \cdot y^{2} + 0.212692 \cdot y \cdot z + 0.586415 \cdot z^{2} - 25.347415 \cdot x - 26.445779 \cdot y - 17.384738 \cdot z + 36.288487$ \\
    \item $\Bb_2(x,y,z) = -0.310531 \cdot x^{4} - 0.192396 \cdot x^{3} \cdot y - 0.192537 \cdot x^{3} \cdot z + 0.253625 \cdot x^{2} \cdot y^{2} - 0.000124 \cdot x^{2} \cdot y \cdot z + 0.140886 \cdot x^{2} \cdot z^{2} - 0.191943 \cdot x \cdot y^{3} + 8.2e-5 \cdot x \cdot y^{2} \cdot z + 2.5e-5 \cdot x \cdot y \cdot z^{2} - 0.192503 \cdot x \cdot z^{3} - 0.371679 \cdot y^{4} + 0.001279 \cdot y^{3} \cdot z - 0.158635 \cdot y^{2} \cdot z^{2} - 0.002735 \cdot y \cdot z^{3} - 0.20944 \cdot z^{4} + 4.788563 \cdot x^{3} - 0.00241 \cdot x^{2} \cdot y + 0.001301 \cdot x^{2} \cdot z - 0.001302 \cdot x \cdot y^{2} - 0.002351 \cdot x \cdot y \cdot z + 8.0e-5 \cdot x \cdot z^{2} + 5.016419 \cdot y^{3} - 0.001557 \cdot y^{2} \cdot z + 6.3e-5 \cdot y \cdot z^{2} + 3.264085 \cdot z^{3} + 0.405335 \cdot x^{2} + 1.486829 \cdot x \cdot y + 1.447237 \cdot x \cdot z + 1.175779 \cdot y^{2} + 0.04123 \cdot y \cdot z + 0.806732 \cdot z^{2} - 26.991353 \cdot x - 28.26892 \cdot y - 18.40208 \cdot z + 51.574377$
\end{description}

\end{document}